\documentclass[journal]{IEEEtran}
\usepackage{graphicx}
\usepackage{amsmath,amssymb,amsfonts,amsthm}
\usepackage[noend]{algpseudocode}
\usepackage{algorithmicx,algorithm}
\usepackage{subfigure}
\usepackage{cite}
\usepackage{setspace}
\usepackage{stfloats}
\usepackage{enumitem}
\usepackage{threeparttable}
\usepackage{comment}
\usepackage{bm}
\usepackage{booktabs}
\usepackage{multirow}
\usepackage{makecell}
\usepackage[table]{xcolor}
\usepackage{tabularx}

\newtheorem{Theorem}{Theorem}
\newtheorem{Definition}{Definition}
\newtheorem{Corollary}{Corollary}
\newtheorem{Proposition}{Proposition}

\newtheorem{remark}{Remark}
\graphicspath{{images/}}

\begin{document}	

\title{Fluid-Spatiotemporal Stochastic Geometry for Information Flow in Non-Stationary Fields: Kinematics and Field Reconstruction}

	\author{Wen-Yu Dong, Weiwei Jiang,~\IEEEmembership{Senior Member,~IEEE}, Song Zhao,\\ Qi Bi,~\IEEEmembership{Fellow,~IEEE}, Sheng Chen,~\IEEEmembership{Life Fellow,~IEEE}	%

		\thanks{W.-Y. Dong, S. Zhao and Q. Bi are with Future Technology Research Center, China Telecom Research Institute, Beijing 102209, China (E-mails: dongwy@chinatelecom.cn; zhaosong1@chinatelecom.cn; qibi@chinatelecom.cn)} %
		\thanks{W. Jiang is with the School of Information and Communication Engineering, Beijing University of Posts and Telecommunications, Beijing, 100876, China (Email: jww@bupt.edu.cn)}
		\thanks{S. Chen is with the School of Electronics and Computer Science, University of Southampton, Southampton SO17 1BJ, U.K., and also with Faculty of Information Science and Technology, Ocean University of China, Qingdao 266100, China (E-mail: sqc@ecs.soton.ac.uk).} %
		\vspace*{-5mm}
	}
	
	\maketitle 

\begin{abstract}
	Classical stochastic geometry primarily addresses forward analysis by prescribing spatial point-process laws and mobility models. It does not directly address the inverse problem of recovering directional macroscopic transport from aggregate density evolution in non-stationary spatial networks. This paper develops the kinematic foundation of fluid-spatiotemporal stochastic geometry (F-STSG), representing an evolving mobile population by an intensity field under an assumed continuum limit while retaining the network infrastructure as a discrete spatial measure. We formulate macroscopic transport reconstruction as an inverse boundary value problem. Under a strictly positive bounded intensity and compatible source and boundary data, an instantaneous weighted minimum-energy criterion selects a unique velocity representative from the admissible class. The reconstruction reduces to a weighted elliptic Neumann problem and separates density-identifiable transport from weighted-solenoidal circulation that is not identifiable from density evolution alone. We further derive the information flux, interface transport, velocity-divergence relation, and centroid dynamics. Numerical experiments examine reconstruction consistency and sensitivity, while the public New Era of Urban Traffic Monitoring with Aerial Footage (pNEUMA) dataset provides held-out validation of internal transport. A separate unmanned aerial vehicle (UAV) study evaluates the centroid as a causal downstream control reference under observation delay, motion constraints, and execution perturbations.
\end{abstract}
	
\begin{IEEEkeywords}
	Fluid-spatiotemporal stochastic geometry, non-stationary wireless networks, inverse boundary value problems, information flux, macroscopic transport.
\end{IEEEkeywords}
	
\section{Introduction}\label{sec:intro} 

\IEEEPARstart{S}{tochastic} geometry (SG) has provided a rigorous framework for relating random spatial configurations of wireless nodes to coverage, interference, and scaling behavior \cite{1995Stochastic,Baccelli2009,Gupta2000,Haenggi2012}. In its classical form, SG represents the network as a spatial snapshot governed by a prescribed point-process law. Mobility-aware extensions introduce temporal evolution by specifying displacement or mobility models between successive spatial configurations \cite{Haenggi2013,Krishnan2017,Tabassum2019}. Both formulations primarily support forward analysis: once the spatial law and mobility mechanism are specified, the resulting wireless performance can be evaluated. In operational networks, however, individual trajectories and exact mobility laws are often unavailable. Network observations more commonly take the form of time-indexed occupancy, traffic, or load fields aggregated over spatial regions.

In non-stationary spatial networks, aggregate demand may evolve coherently across space rather than through independently resampled configurations. We refer to this structured macroscopic evolution as the ``Digital Tide,'' represented mathematically by a time-varying macroscopic intensity field $\lambda$. It encompasses the propagation, accumulation, depletion, and boundary exchange of spatial demand, all of which can influence handover, interference, and infrastructure load. An instantaneous density field indicates where demand is concentrated, while a sequence of such fields reveals how the spatial distribution changes over time. What remains unobserved is the directional transport that carries demand between locations. Recovering this transport from aggregate density evolution is therefore necessary to move beyond a sequence of spatial snapshots toward a macroscopic description of network dynamics.

We represent this macroscopic evolution by a time-varying intensity field $\lambda(\bm x,t)$ \cite{Lasry2007}. Its evolution obeys a conservation law relating the density to the information flux $\bm J=\lambda\bm v$, together with an internal source term and boundary exchange. Given the density evolution and source term, the conservation law determines the divergence of $\bm J$, while the prescribed boundary exchange constrains its normal component. These constraints do not, in general, uniquely determine the full vector field because divergence-free components with compatible boundary behavior may remain invisible to the observed density evolution. The central theoretical problem is therefore to select a physically meaningful macroscopic transport representative from these aggregate observations.

The present paper develops the kinematic foundation of fluid-spatiotemporal stochastic geometry (F-STSG). A broader preliminary formulation of F-STSG is available in the arXiv preprint \cite{Dong2026FSTSG}, whereas this paper focuses on its density-to-transport inverse problem. Specifically, we formulate a weighted inverse boundary value problem \cite{Isakov2006}, define the admissible class of velocity fields consistent with the estimated density evolution, specified source term, and prescribed normal boundary flux, and select its instantaneous minimum-energy element. Under explicit positivity, compatibility, boundary, and weighted-space assumptions, this selection yields a unique macroscopic velocity representative and separates the density-identifiable transport component from microscopic circulation that cannot be recovered from aggregate density observations.

Once the transport field is reconstructed, the selected velocity specifies a balance-consistent macroscopic transport representative, its information flux quantifies the corresponding signed transport across spatial regions, and its divergence characterizes compression and expansion within that representation. Together, these quantities provide a kinematic description of how spatial demand moves, accumulates, and crosses network boundaries, thereby forming a basis for downstream tasks such as handover analysis, load migration, routing, and resource placement. This paper focuses on reconstructing and characterizing these macroscopic kinematic quantities from aggregate field observations, rather than developing application-specific decision policies.
	
\subsection{Related Works}\label{sec:related_works} 

The inverse reconstruction studied in this paper is closely related to dynamic optimal transport, mobility-aware stochastic geometry, continuum and mean-field modeling, and spatiotemporal field estimation. These lines of work address different aspects of density evolution, transport, mobility, and field estimation.

\begin{table*}[b]
	\vspace*{-4mm}
	\scriptsize
	\centering
	\caption{Scope-based comparison of F-STSG with closely related technical approaches}
	\label{tab:gap_analysis} 
	\vspace*{-2mm}
	\renewcommand{\arraystretch}{1.16}
	\setlength{\tabcolsep}{2.4pt}
	\begin{threeparttable}
			\begin{tabular}{
					>{\raggedright\arraybackslash}p{6cm}
					>{\centering\arraybackslash}p{2.05cm}
					>{\centering\arraybackslash}p{2.05cm}
					>{\centering\arraybackslash}p{2.05cm}
					>{\centering\arraybackslash}p{2.05cm}
					>{\columncolor{gray!10}\centering\arraybackslash}p{2.05cm}}
				\toprule
				\multicolumn{1}{c}{\textbf{Technical Dimension}} &
				\textbf{\makecell{Mobility-Aware\\SG}} &
				\textbf{\makecell{Continuum Flow\\(MFG/LWR)}} &
				\textbf{\makecell{Graph and Data-\\Driven Models}} &
				\textbf{\makecell{Dynamic Optimal\\Transport}} &
				\textbf{\makecell{F-STSG\\(This Work)}} \\
				\midrule
				
				\textbf{1. Euclidean spatial state} &
				\checkmark & \checkmark & $\triangle$ & \checkmark & \textbf{\checkmark} \\
				\textit{Uses physical coordinates, distances, or spatial measures} &
				& & & & \\
				
				\textbf{2. Non-stationary evolution} &
				$\triangle$ & \checkmark & \checkmark & \checkmark & \textbf{\checkmark} \\
				\textit{Represents spatial states that evolve continuously over time} &
				& & & & \\
				
				\textbf{3. Density-to-transport inversion} &
				$\times$ & $\triangle$ & $\triangle$ & \checkmark & \textbf{\checkmark} \\
				\textit{Infers a transport field from observed density evolution} &
				& & & & \\
				
				\textbf{4. No prescribed microscopic law} &
				$\times$ & $\triangle$ & \checkmark & \checkmark & \textbf{\checkmark} \\
				\textit{Does not require individual trajectories or agent utilities} &
				& & & & \\
				
				\textbf{5. Conservation-constrained kinematics} &
				$\triangle$ & \checkmark & $\triangle$ & \checkmark & \textbf{\checkmark} \\
				\textit{Links density evolution to velocity, flux, and divergence} &
				& & & & \\
				
				\textbf{6. Field--infrastructure coupling} &
				$\triangle$ & $\triangle$ & $\triangle$ & $\times$ & \textbf{\checkmark} \\
				\textit{Couples a continuous demand field to discrete wireless resources} &
				& & & & \\

				\midrule
				\multicolumn{1}{c}{\textbf{Representative References}} &
				\cite{Baccelli2009,Haenggi2013} &
				\cite{Lasry2007,Kim2020,Lighthill1955} &
				\cite{Yu2018STGCN,Wu2019GraphWaveNet,Wang2023} &
				\cite{Benamou2000,Villani2003} &
				\textbf{This Work} \\
				\bottomrule
			\end{tabular}
			\begin{tablenotes}
				\footnotesize
				\item The entries summarize the typical formulations and primary objectives of the cited works rather than universal capabilities of every method in each broad class. \checkmark: explicitly represented as a characteristic component of the class; $\triangle$: present only in some representative formulations, under additional assumptions, or as a partial component; $\times$: outside the primary formulation of the cited representative works. MFG denotes mean-field games, LWR denotes the Lighthill--Whitham--Richards model, and F-STSG denotes fluid-spatiotemporal stochastic geometry.
		\end{tablenotes}
	\end{threeparttable}
	\vspace*{-1mm}
\end{table*}

\subsubsection{Dynamic optimal transport}
Dynamic optimal transport characterizes density evolution through a continuity constraint and a kinetic-energy objective over a density--velocity path \cite{Benamou2000,Villani2003}. Building on this variational structure, the present formulation conditions on an estimated wireless density evolution and reconstructs, at each observation time, the weighted minimum-energy element of the admissible velocity class under specified source and boundary data. The reconstructed continuum quantities are subsequently coupled to a discrete wireless infrastructure measure. The minimum-energy selection is performed independently at each observation time and does not require the observed density trajectory to be a finite-horizon Wasserstein geodesic.

\subsubsection{Mobility-aware stochastic geometry}
Following the foundational works in \cite{Baccelli2009,Haenggi2012}, SG has become a standard tool for spatial wireless analysis \cite{DongTCOM,DongJSAC,DongIoTJ,DongTCOM2,DongGC2}. Mobility-aware extensions introduce prescribed displacement, mobility, or time-varying intensity models to characterize handover, interference correlation, queueing dynamics, and temporal wireless performance \cite{Lu2021,Win2009,Yang2018,Zhong2018}. F-STSG considers the inverse direction: an estimated density evolution, together with source and boundary information, is used to reconstruct a density-identifiable macroscopic transport representative. Demand-aware unmanned aerial vehicle (UAV) deployment methods \cite{Zhang2022} are relevant to the subsequent control application, where they provide application-level references and baselines.

\subsubsection{Continuum and mean-field models}
Continuous-media models describe dense interacting systems through macroscopic states. Representative examples include vehicular-flow models \cite{Lighthill1955,Greenshields1935}, pedestrian-flow models \cite{Helbing1995,Hughes2002}, and mean-field game (MFG) formulations for wireless caching, computation offloading, and interference management \cite{Kim2020,Zheng2021,Zhang2019,Wang2014}. These models typically prescribe a velocity--density relation, cost functional, or agent objective and then solve a forward evolution, equilibrium, or control problem. The present formulation instead treats the density evolution, source term, and boundary exchange as given inputs and performs a fixed-time inverse selection without identifying individual trajectories or microscopic utilities.

\subsubsection{Spatiotemporal estimation and forecasting}
Temporal point processes, complex-network models, and data-driven methods support event modeling, traffic estimation, forecasting, and resource decisions \cite{Gonzalez2016,Boccaletti2006,Wang2023,Ding2025,Li2024}. Representative spatiotemporal graph models, including the spatiotemporal graph convolutional network (STGCN) \cite{Yu2018STGCN} and Graph WaveNet \cite{Wu2019GraphWaveNet}, learn temporal dependencies over a discrete graph. Physical distance or spatial adjacency can be incorporated through graph construction in specific formulations, but Euclidean geometry is not necessarily retained as an explicit state variable across the broader graph and data-driven class. Their estimated or forecast density fields can serve as inputs to the conservation-constrained reconstruction, while the source and boundary terms must be obtained from corresponding measurements or separately specified models.

Table~\ref{tab:gap_analysis} summarizes how the representative formulations treat spatial state, temporal evolution, density-to-transport reconstruction, conservation-constrained kinematics, and wireless infrastructure coupling.
F-STSG integrates these elements through a wireless density-to-transport inverse formulation, a weighted admissible-space analysis, and the resulting velocity, information-flux, and divergence quantities under the stated positivity, compatibility, and boundary assumptions. The reconstructed quantities provide a kinematic interface to subsequent wireless analysis and application-specific control.

\subsection{Main Contributions}\label{sec:contributions}
This paper establishes the kinematic and inverse-reconstruction foundation of F-STSG for non-stationary spatial networks. Starting from the conservation of an evolving intensity field, it develops a principled route from aggregate spatial observations to macroscopic transport, network-level kinematic quantities, and constrained control references. The main contributions are as follows:
\begin{itemize}
	\item \textbf{Conservation-based kinematic foundation:} We formulate non-stationary network evolution in terms of the intensity field $\lambda(\bm x,t)$, the information flux $\bm J(\bm x,t)=\lambda(\bm x,t)\bm v(\bm x,t)$, internal source terms, and boundary exchange. This formulation connects the microscopic user constellation to a continuum transport description and makes explicit the inverse question at the core of F-STSG: which part of the macroscopic motion is identifiable from aggregate density evolution?
	\item \textbf{Well-posed inverse reconstruction and identifiability:} We cast density-to-transport reconstruction as a weighted inverse boundary value problem. In the corresponding weighted admissible space, an instantaneous kinetic-energy criterion selects a unique velocity field under the stated positivity, compatibility, and boundary assumptions. The resulting weighted elliptic problem provides both a constructive reconstruction method and a precise separation between density-identifiable transport and circulation that is invisible to the density evolution.
	\item \textbf{Analytical characterization under spatial heterogeneity:} For a positive background intensity with weak spatial variation, we derive a first-order perturbation model of the weighted potential equation, together with its effective forcing, Green's-function representation, and spectral decomposition. The modal form reveals how spatial heterogeneity modifies the reconstructed transport through the eigenstructure of the background operator and extends the analysis beyond radially symmetric fields.
	\item \textbf{Network-level kinematic quantities and empirical evaluation:} From the reconstructed transport field, we derive signed interface transport, fixed-cell population balance, velocity divergence, the material-derivative relation, and centroid dynamics. Controlled experiments examine finite-population convergence and reconstruction sensitivity, while the real New Era of Urban Traffic Monitoring with Aerial Footage (pNEUMA) mobility traces provide held-out validation of internal crossing rates. A separate UAV study evaluates the centroid as a causal control reference under observation delay, motion constraints, and execution-layer nonidealities.
\end{itemize}
The remainder of this paper is organized as follows. Section~\ref{sec:system_model} develops the continuum model from the underlying user constellation and formulates the conservation law. Section~\ref{sec:core_theory} establishes the inverse-reconstruction theory and its minimum-energy solution. Section~\ref{sec:linear_response} derives the weak-variation Green's-function and spectral representations. Section~\ref{sec:metrics_toolbox} develops the kinematic quantities used for network analysis. Section~\ref{sec:simulation} evaluates finite-population convergence, matched reconstruction, sensitivity to observation and model perturbations, real-trace transport reconstruction, the derived kinematic identities, and centroid-reference UAV control under planning and execution constraints. Finally, Section~\ref{sec:conclusion} concludes the paper.

\begin{table}[!b]
	\vspace*{-4mm}
	\scriptsize
	\caption{Principal notation used across sections}
	\label{tab:key_notations}
	\vspace*{-2mm}
	\centering
	\renewcommand{\arraystretch}{1.15}
	\setlength{\tabcolsep}{2pt}
	\begin{tabularx}{\columnwidth}{
			@{}
			>{\raggedright\arraybackslash}p{0.27\columnwidth}
			>{\raggedright\arraybackslash}X
			@{}
		}
		\toprule
		\textbf{Symbol} & \textbf{Definition} \\
		\midrule
		$\Omega,\partial\Omega,\bm n;\ \Phi_{\mathrm B}$ &
		Observation domain, boundary, outward normal; base-station process \\
		$\Phi_{\mathrm M}^{(n)}(t),\mu_n,p$ &
		Mobile-user process, empirical measure, and limiting density \\
		$\lambda,\widehat{\lambda},N(t)$ &
		Macroscopic intensity, its estimate, and total population \\
		$S,g;\ t_q,\Delta t_{\mathrm{upd}}$ &
		Source/sink, boundary flux; update time and interval \\
		$\bm v,\bm v^*,\phi$ &
		Admissible velocity, selected velocity, and scalar potential \\
		$\mathcal V_t,L^2_{\lambda}(\Omega)^d$ &
		Weighted admissible set and velocity space \\
		$\bm J,\mathcal D,\bm C$ &
		Information flux, velocity divergence, and centroid \\
		$\Gamma,\bm n_{\Gamma},J_{\Gamma}$ &
		Oriented interface, unit normal, and signed transport rate \\
		$\lambda_0,\delta\lambda,\phi_0,\delta\phi$ &
		Background fields and their first-order perturbations \\
		$\mathcal L_0,G$ &
		Background weighted operator and Green's function \\
		$\psi_n,\nu_n$ &
		Eigenfunctions and eigenvalues of $\mathcal L_0$ \\
		$R,\mathcal A$ &
		Mean spectral efficiency and area spectral efficiency \\
		\bottomrule
	\end{tabularx}
	\vspace*{-1mm}
\end{table}
Throughout the paper, bold lower-case letters denote vectors, $\|\cdot\|$ denotes the Euclidean norm, and $\nabla$, $\nabla\cdot$, and $\nabla^2$ denote the gradient, divergence, and Laplacian, respectively. The brackets $\langle\cdot,\cdot\rangle_{\Omega}$ and $\langle\cdot,\cdot\rangle_{\partial\Omega}$ denote the volume and boundary duality pairings, respectively, and $1$ denotes the constant test function equal to one. When both arguments are integrable, these pairings reduce to the corresponding volume and boundary integrals. The superscript $^*$ identifies the selected minimum-energy field, whereas $\widehat{\cdot}$ denotes an observation-based or numerical estimate. The symbol $\delta$ denotes a first-order perturbation, except in the standard Dirac notation $\delta_{\bm x}$ and $\delta(\cdot)$, whereas $\Delta$ applied to a field denotes an estimated-minus-reference discrepancy. In the variational formulation, $\bm v$ denotes a generic admissible field; reconstructed kinematic quantities use $\bm v^*$ after Proposition~\ref{Prop_Optimal_Transport}. The indicator function is denoted by $\mathbb I(\cdot)$. Symbols used only within one subsection are defined at their first occurrence. Table~\ref{tab:key_notations} summarizes the principal notation used throughout the paper.

	\section{System Model}\label{sec:system_model} 
	
We introduce the F-STSG framework. Adopting a field-theoretic perspective, this framework models the network as a coupled system comprising two distinct components: a static discrete random measure representing the network infrastructure, and a continuous intensity field representing the hydrodynamic limit of the mobile-user population.
	
	\subsection{Network Skeleton: Discrete Random Measures}\label{sec:model_skeleton} 
	
	The network infrastructure is modeled as a discrete random measure $\Phi_{\mathrm{B}}$ on the Euclidean domain $\mathbb{R}^d$ \cite{Daley2003}:
	\begin{equation} 
		\Phi_{\mathrm{B}} = \sum_{i} \delta_{\bm{X}_i},
	\end{equation}
	where $\{\bm{X}_i\}$ denotes the random locations of the base stations (BSs), and $\delta_{\bm{x}}$ is the Dirac measure concentrated at $\bm{x}$. In a $K$-tier heterogeneous network (HetNet), $\Phi_{\mathrm B}$ is the superposition of $K$ independent tier-specific point-process measures, i.e.,
	$
	\Phi_{\mathrm B}=\sum_{k=1}^{K}\Phi_k.
	$
	While the numerical validation in Section~\ref{sec:simulation} assumes a Poisson point process (PPP) for tractability, the core kinematic theory developed in Section~\ref{sec:core_theory} applies to general simple, stationary ergodic point processes.
	
	A mobile user at location $\bm{x} \in \mathbb{R}^d$ associates with the infrastructure based on the maximum biased received power. 
	Let $\mathcal{K}(i) \in \{1, \dots, K\}$ denote the tier index of the BS at $\bm{X}_i$. 
	We adopt the power-law path-loss model $r^{-\alpha_k}$, where $\alpha_k>2$ is the path-loss exponent for tier $k$.
	The association cell $V_i$ of a generic BS $\bm{X}_i$ is defined by the weighted Voronoi tessellation:
	\begin{align}  
		V_i = \Big\{\bm{x} \in \mathbb{R}^d : P_{\mathcal{K}(i)} \mathcal{W}_{\mathcal{K}(i)} \|\bm{x} - \bm{X}_i\|^{-\alpha_{\mathcal{K}(i)}} \ge \nonumber\\
		P_{\mathcal{K}(j)} \mathcal{W}_{\mathcal{K}(j)} \|\bm{x} - \bm{X}_j\|^{-\alpha_{\mathcal{K}(j)}}, \forall j \neq i \Big\},
		\label{eq:assoc_rule_generic}
	\end{align}
	where $P_k$ and $\mathcal{W}_k$ denote the transmit power and association bias weight for tier $k$, respectively.
		\begin{figure}[!t]
		\centering
		\includegraphics[width=0.93\columnwidth]{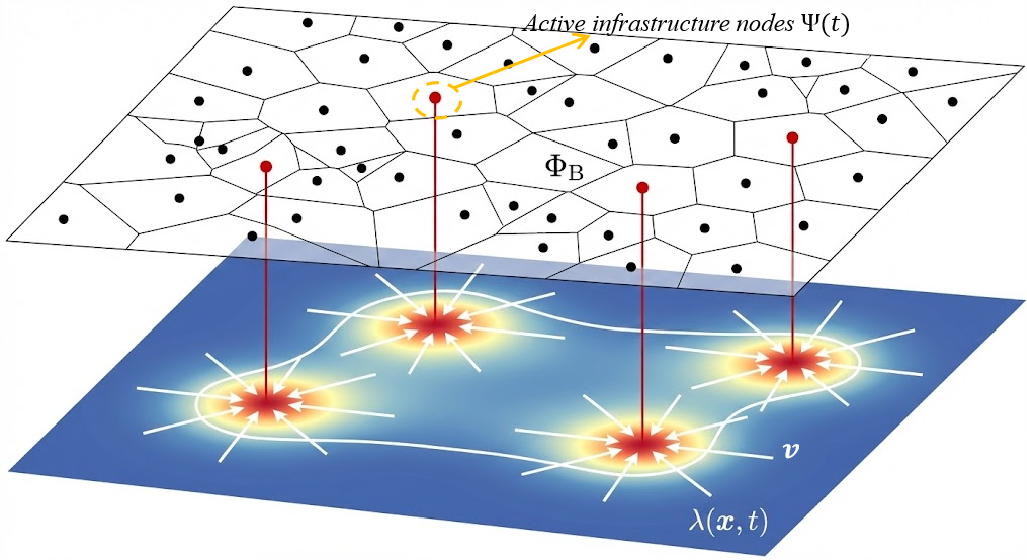}
		\vspace*{-2mm}
		\caption{Illustration of the field-measure coupling in F-STSG.}
		\label{fig:system_model} 
		\vspace*{-4mm}
	\end{figure}	
	
F-STSG represents the mobile-user population by a continuum field under the assumed scaling limit, while retaining the BS infrastructure as a discrete random measure. Thus, $\Phi_{\mathrm B}$ represents the discrete infrastructure, whereas $\lambda(\bm x,t)$ and $\bm J(\bm x,t)$ describe the evolving traffic field and its macroscopic transport. This coupling allows transport quantities such as user flux across cell boundaries to be interpreted relative to the discrete network infrastructure.

\subsection{Continuum Limit and Macroscopic Intensity Field}\label{sec:model_user_population} 
To characterize the macroscopic transport of traffic demand, let $\{\Phi_{\mathrm M}^{(n)}(t)\}_{n\geq1}$ be a sequence of mobile-user point processes indexed by the scaling parameter $n$. The associated normalized empirical measure is
\begin{equation} 
	\mu_n(A,t)=\frac{1}{n}\sum_{\bm Y_j\in\Phi_{\mathrm M}^{(n)}(t)}\mathbb I(\bm Y_j\in A),\qquad A\in\mathcal B(\mathbb R^d),
\end{equation}
where $\bm Y_j$ is the location of the $j$-th user, $\mathcal B(\mathbb R^d)$ denotes the Borel $\sigma$-algebra, and $n$ is the population-scaling parameter. We use a normalization for which $\mu_n(\mathbb R^d,t)\to1$ in probability at each fixed time. We assume that, as $n\to\infty$, $\mu_n$ converges weakly in probability to a deterministic probability density $p(\bm x,t)$:
\begin{equation} 
	\mu_n(\cdot,t)\xrightarrow{w}p(\bm x,t)\,\mathrm d\bm x\triangleq\frac{\lambda(\bm x,t)}{N(t)}\,\mathrm d\bm x,
	\label{eq:weak_convergence}
\end{equation}
where $\bm x\in\mathbb R^d$ denotes spatial location and $t\in\mathbb R^+$ denotes time, $N(t)=\int_{\mathbb R^d}\lambda(\bm x,t)\,\mathrm d\bm x$ is the total expected population, and $\lambda(\bm x,t)$ is the resulting macroscopic mobile-node intensity field. Under homogeneous per-user demand, it is proportional to the traffic-demand density; heterogeneous demand requires a correspondingly weighted intensity.

\begin{Definition}\label{def:digital_tide} 
	The macroscopic intensity field $\lambda(\bm{x},t):\mathbb{R}^d\times\mathbb{R}^+\to\mathbb{R}^+$ is the nonnegative continuum field associated with the limit in \eqref{eq:weak_convergence}, with normalized spatial density $p(\bm x,t)=\lambda(\bm x,t)/N(t)$. We use the term ``Digital Tide'' to describe the structured spatiotemporal evolution of this field.
\end{Definition}

According to the displacement theorem of SG \cite{1995Stochastic, Baccelli2009}, when points of an initial Poisson process undergo admissible independent displacements driven by a prescribed mapping, the resulting process remains Poisson with a transformed intensity. This provides one tractable microscopic realization of the macroscopic intensity evolution. The inverse field reconstruction below requires an estimated intensity and a balance law, but it does not require every operational snapshot to follow an inhomogeneous Poisson point process. In the corresponding Poisson setting, the discrete realization at time $t$, denoted by $\Phi_{\mathrm M}(t)$, can be modeled as an inhomogeneous Poisson point process governed by the advected intensity. For any Borel set $A \subseteq\mathbb{R}^d$, the expected number of users $N_A(t)$ in $A$ is
\begin{equation} 
	\mathbb{E}[N_A(t)] = \int_A \lambda(\bm{x},t) \,\mathrm{d}\bm{x}.
\end{equation}
The macroscopic intensity field obeys a balance law driven by a macroscopic velocity field $\bm{v}(\bm{x},t)$, a local source or sink $S(\bm{x},t)$, and possible boundary exchange. The source-free closed model used for the main analytical development is the special case in which local density changes are attributable to flux divergence alone.
	
	\begin{remark} 
		The continuum approximation in \eqref{eq:weak_convergence} is asymptotic and scale dependent. Under the Poisson sampling and smoothing assumptions used in Section~\ref{sec:sim_consistency}, relative fluctuations decrease with the effective number of observations in a smoothing support. For a bin or compact kernel support $B_h(\bm x)$ of scale $h$, a useful local diagnostic is
		\begin{equation}
		N_h^{\mathrm{eff}}(\bm x,t)\triangleq\int_{B_h(\bm x)}\lambda(\bm y,t)\,\mathrm d\bm y,
		\label{eq:local_effective_count}
		\end{equation}
		which is the expected number of observations contributing locally under the Poisson model. A large total population $N(t)$ does not prevent poor reconstruction where $N_h^{\mathrm{eff}}$ is small. Eq.~\eqref{eq:local_effective_count} is therefore a diagnostic variable, not a universal validity threshold: sparse rural-IoT regimes may require coarser bins, stronger regularization, a parametric density model, or a hybrid discrete--continuous description.
	\end{remark}

	\subsection{Observation of the Macroscopic Intensity Field}\label{sec:observation_model}
	The inverse problem uses a regularized field estimate rather than assuming direct access to a continuous $\lambda(\bm x,t)$. Let $t_q$ denote the $q$th field-update time. Aggregate location samples, occupancy counts, or load telemetry can be mapped to $\widehat{\lambda}(\bm x,t_q)$ by a documented spatial estimator, and neighboring regularized field estimates provide $\widehat{\partial_t\lambda}(\bm x,t_q)$. External ingress--egress measurements across $\partial\Omega$ may inform the boundary-flux estimate $\widehat g$, whereas handover counts across internal interfaces may inform or validate internal flux integrals and do not enter the outer-boundary condition. The estimated intensity must remain positive on the modeled support through a documented domain restriction or intensity floor. Source and boundary terms must be observed, modeled, or set to zero because they are not separately identifiable from density evolution alone. Empirical validation of the continuum approximation requires mobility or occupancy records aligned with the spatial grid and reporting interval. Validation of the reconstructed transport further requires trajectory or boundary-crossing measurements that are withheld from the reconstruction. The count-based observation models and their sensitivity are evaluated in Section~\ref{sec:sim_sensitivity}, while their use in constrained centroid control is evaluated in Section~\ref{sec:applications}.

	\subsection{Topological Dynamics: Non-Separable Intensity Fields}\label{sec:model_engine} 
	
	The structural properties of the intensity field $\lambda(\bm{x},t)$ characterize how its spatial profile evolves over time. To distinguish shape-preserving from shape-changing evolution, we classify the field according to the coupling between its spatial geometry and temporal variation.

		\begin{Definition}\label{def:separability}
			A strictly positive spatiotemporal field is separable if it admits the multiplicative factorization $\lambda(\bm{x},t)=a(\bm{x})b(t)$. This implies that the spatial gradient of its temporal logarithmic derivative vanishes identically over the entire domain:
			\begin{equation} 
				\nabla_{\bm{x}} \left( \frac{\partial}{\partial t} \ln \lambda(\bm{x},t) \right) \equiv \bm{0}, \quad \forall (\bm{x},t) \in \mathbb{R}^d \times \mathbb{R}^+.
				\label{eq:sep_condition}
			\end{equation}
			Conversely, a field is non-separable if this mixed derivative is not identically zero (i.e., $\nabla_{\bm{x}} \partial_t \ln \lambda \not\equiv \bm{0}$). 
		\end{Definition}
		
		A separable field preserves its spatial shape and permits only a global amplitude change. In the closed source-free setting with constant total mass, this amplitude is constant; hence, nontrivial spatial redistribution requires a non-separable field.
		
	As an example, consider a time-varying power-law topology modeling a dynamic traffic profile.  At an illustrative level, this profile can represent a localized traffic-demand hotspot, such as transient concentration around a transport hub, event venue, or dense Internet of Things (IoT) cluster, whose spatial footprint evolves over time: a larger $\beta(t)$ corresponds to stronger central concentration, whereas a smaller $\beta(t)$ produces a broader spatial distribution. The intensity is a radially symmetric field with a time-varying decay exponent $\beta(t)$:
	\begin{equation}
		\lambda(\bm{x},t) = \Lambda_0(t) (d_0 + \|\bm{x}\|)^{-\beta(t)},
	\end{equation}
	where $d_0>0$ ensures finiteness at the origin, $\beta(t)>d$ ensures integrability over $\mathbb R^d$, and $\Lambda_0(t)$ is a time-dependent normalization factor determined by the population-normalization condition
	$\int_{\mathbb R^d}\lambda(\bm{x},t)\,\mathrm d\bm{x}=N(t)$. Applying the non-separability criterion, the mixed derivative yields
		\begin{equation} 
			\nabla_{\bm{x}} \left( \frac{\partial}{\partial t} \ln \lambda(\bm{x},t) \right) = - \frac{\dot{\beta}(t)}{d_0 + \|\bm{x}\|} \frac{\bm{x}}{\|\bm{x}\|}.
		\end{equation}
		For any $\bm{x}\neq\bm 0$, this vector field is non-vanishing for $\dot{\beta}(t)\neq0$. Thus, a time-varying decay exponent changes the spatial shape of the intensity field. In the closed source-free model, such non-separable evolution requires nontrivial macroscopic transport.
	
\subsection{Dynamic State Process: Fluid-Infrastructure Coupling}\label{sec:model_coupling} 
	
		To represent the interaction between aggregate demand and network resources, F-STSG couples the continuous intensity field $\lambda(\bm{x},t)$ with the discrete infrastructure $\Phi_{\mathrm{B}}$. Let $\Psi(t)=\sum_i A_i(t)\delta_{\bm X_i}$,
		where $A_i(t)\in\{0,1\}$ indicates whether BS $i$ is active at time $t$. Thus, $\Psi(t)$ is the active-infrastructure random measure obtained by thinning $\Phi_{\mathrm B}$. In a $K$-tier HetNet, $\Psi(t)=\sum_{k=1}^{K}\Psi_k(t)$, where $\Psi_k(t)$ denotes the active thinning of tier $k$. The activation of on-demand infrastructure may depend on the local intensity $\lambda(\bm x,t)$, thereby coupling the continuous demand field to the discrete infrastructure measure.
		
		Given a realization of the active set $\Psi(t)$, a user at $\bm{x}$ associates with the serving BS $\bm{X}^* \in \Psi(t)$ providing the maximum biased received power. Consistent with the association policy in Section~\ref{sec:model_skeleton}, the serving BS is determined by
		\begin{equation} 
			\bm{X}^* = \arg \max_{\bm{X}_i \in \operatorname{supp}\Psi(t)} P_{\mathcal{K}(i)} \mathcal{W}_{\mathcal{K}(i)} \|\bm{x} - \bm{X}_i\|^{-\alpha_{\mathcal{K}(i)}}.
			\label{eq:assoc_rule}
		\end{equation}
	This dynamic association rule retains tier prioritization while allowing the active infrastructure to vary with the macroscopic state.
	
	The inverse reconstruction in Section~\ref{sec:core_theory} operates on the continuum field, while the discrete infrastructure re-enters when the reconstructed transport is interpreted relative to cell boundaries and active resources.
	\section{Field Reconstruction: The Inverse Boundary Value Problem}\label{sec:core_theory} 
	
	Under the continuum-limit assumption introduced in Section~\ref{sec:system_model}, the mobile-user distribution is represented at the selected resolution by the continuous intensity field $\lambda(\bm{x},t)$. While $\lambda(\bm{x},t)$ is the aggregate state, the underlying driver of network evolution includes macroscopic transport, local generation or depletion, and boundary exchange. We formulate the identifiable part of these latent dynamics as an inverse problem: reconstructing the minimum-energy macroscopic velocity field from the estimated density evolution together with specified source and boundary data.
	
	\subsection{Conservation Laws and Underdetermined System}\label{sec:continuity_equation} 
	
	The fundamental link between topological evolution and transport dynamics is  mass conservation. Let $S(\bm{x},t)$ denote the local rate of demand generation or depletion. For a conservative network with a vanishing source term ($S(\bm{x},t)=0$), the local rate of change in intensity is balanced by the divergence of the flux vector:
		\begin{equation} 
			\frac{\partial\lambda(\bm x,t)}{\partial t}+\nabla\cdot\bigl(\lambda(\bm x,t)\bm v(\bm x,t)\bigr)=0.
			\label{eq:continuity_conservative}
	\end{equation}
	
More generally, on a bounded observation domain $\Omega$ the balance law is
\begin{equation}
\frac{\partial\lambda}{\partial t}+\nabla\cdot(\lambda\bm v)=S
\quad\text{in }\Omega,
\label{eq:continuity_general}
\end{equation}
with outward normal flux
\begin{equation}
g(\bm x,t)\triangleq\lambda(\bm x,t)\bm v(\bm x,t)\cdot\bm n
\quad\text{on }\partial\Omega.
\label{eq:boundary_flux_general}
\end{equation}
Integration over $\Omega$ gives the compatibility identity
\begin{equation}
\frac{\mathrm d}{\mathrm dt}\int_{\Omega}\lambda\,\mathrm d\bm x
=\int_{\Omega}S\,\mathrm d\bm x-\int_{\partial\Omega}g\,\mathrm dS.
\label{eq:mass_compatibility_general}
\end{equation}
The conservative closed model in \eqref{eq:continuity_conservative} is recovered by setting $S=0$ and $g=0$.

	Assuming homogeneous per-user traffic demand, $\lambda(\bm{x},t)$ serves as the macroscopic descriptor for the spatial load. In the closed source-free setting, continuous user trajectories lead to the conservative continuity equation \eqref{eq:continuity_conservative}. While total load dynamics are driven by both local generation ($S \neq 0$) and spatial advection ($-\nabla \cdot (\lambda \bm{v})$), the closed conservative model is used first to isolate transport-induced topology reconfiguration. The extension to known $S$ and nonzero boundary flux $g$ is given below. If either term is unknown, its contribution is confounded with advection and the inverse problem is not identifiable from $\lambda$ alone.
	
	Mathematically, \eqref{eq:continuity_conservative}  provides only a single scalar constraint for the $d$-dimensional vector field $\bm{v}(\bm{x},t)$. Consequently, the inverse problem of determining $\bm v$ from $\lambda$ is non-unique up to a weighted-solenoidal component $\bm v_{\mathrm{sol}}$ satisfying $\nabla\cdot(\lambda\bm v_{\mathrm{sol}})=0$.
	
	\begin{remark} 
		It is important to differentiate F-STSG from classical vehicular traffic  models, such as the Lighthill--Whitham--Richards (LWR) model. First, traffic theory typically solves a forward problem, prescribing a velocity-density relation to predict density evolution. In contrast, F-STSG solves the inverse problem, treating the observable intensity evolution $\lambda(\bm{x},t)$ as the input to reconstruct the latent velocity field. Second, while traffic models focus solely on fluid hydrodynamics, F-STSG couples this continuous field with the discrete network infrastructure. This coupling places boundary flux and centroid drift in the same spatial coordinate system as the discrete infrastructure, so that these quantities may be used as inputs to downstream network-control models.
	\end{remark}
	
The underdetermination has two levels. Even when $S$ and $g$ are known, one may add a weighted-solenoidal component $\bm v_{\mathrm{sol}}$ satisfying $\nabla\cdot(\lambda\bm v_{\mathrm{sol}})=0$ and zero normal flux without changing the observed density. If $S$ or $g$ is unknown, source activity and transport are additionally confounded. The minimum-energy principle below resolves the first ambiguity by selecting one representative; it cannot resolve the second without supplementary observations or a source model.

\subsection{Weighted Minimum-Energy Selection and Its Optimal-Transport Connection}

	We resolve this non-uniqueness by selecting, among the admissible fields, the one with minimum instantaneous weighted kinetic energy. The objective has the same kinetic-energy form as the integrand in dynamic optimal transport, whereas uniqueness is established directly in the weighted admissible space below.

For a fixed time, let $L^2_{\lambda}(\Omega)^d$ be the weighted Hilbert space with norm $\|\bm v\|_{L^2_{\lambda}}^2=\int_{\Omega}\lambda\|\bm v\|^2\,\mathrm d\bm x$. For the closed source-free case, define the weak admissible set
\begin{equation}
\begin{aligned}
\mathcal V_t\triangleq\bigg\{\bm v\in L^2_{\lambda}(\Omega)^d:\;&
\int_{\Omega}\lambda\bm v\cdot\nabla w\,\mathrm d\bm x\\
&=\left\langle\partial_t\lambda,w\right\rangle_{\Omega},
\quad \forall w\in H^1(\Omega)\bigg\}.
\end{aligned}
\label{eq:admissible_velocity_set}
\end{equation}
Here, $H^1(\Omega)$ denotes the standard first-order Sobolev space, $H^{-1}(\Omega)$ its dual space, and $H^{-1/2}(\partial\Omega)$ the corresponding dual trace space used below for boundary data. For sufficiently smooth fields, this weak constraint is equivalent to \eqref{eq:continuity_conservative} with zero normal flux.

	\begin{Proposition}\label{Prop_Optimal_Transport} 
		Let $\Omega$ be a bounded connected Lipschitz domain, let $0<\lambda_{\min}\leq\lambda(\bm{x},t)\leq\lambda_{\max}<\infty$ almost everywhere, and let $\partial_t\lambda\in H^{-1}(\Omega)$ have zero mean. Among all fields in $\mathcal V_t$, the irrotational field $\bm{v}^*=-\nabla\phi$ is the unique minimizer of the instantaneous weighted kinetic energy
		\begin{equation} 
			E(t) = \int_{\Omega} \lambda(\bm{x},t) \|\bm{v}(\bm{x},t)\|^2 \,\mathrm{d}\bm{x}.
		\end{equation}
		Here, $\phi$ is the zero-mean weak solution of the Neumann problem in Theorem~\ref{Theorem1}. This is a pointwise-in-time minimum-energy selection. It uses the kinetic-energy integrand appearing in the Benamou--Brenier action, but it does not assert that the complete observed density path is a global Wasserstein geodesic.
	\end{Proposition}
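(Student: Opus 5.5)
The plan is to treat the minimization as an orthogonal projection in the weighted Hilbert space $L^2_{\lambda}(\Omega)^d$. First I would establish that the weighted Neumann problem referred to in Theorem~\ref{Theorem1} is well posed. In weak form it reads: find $\phi\in H^1(\Omega)$ with zero mean such that
\begin{equation*}
-\int_{\Omega}\lambda\nabla\phi\cdot\nabla w\,\mathrm d\bm x=\left\langle\partial_t\lambda,w\right\rangle_{\Omega},\qquad \forall w\in H^1(\Omega).
\end{equation*}
Since $\lambda_{\min}\le\lambda\le\lambda_{\max}$, the bilinear form $a(\phi,w)=\int_\Omega\lambda\nabla\phi\cdot\nabla w$ is bounded on $H^1(\Omega)$. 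On the closed subspace $H^1_0{}^{\!\mathrm{mean}}=\{w\in H^1:\int_\Omega w=0\}$ it is coercive, by the Poincar\'e--Wirtinger inequality, which holds on a bounded connected Lipschitz domain. The zero-mean hypothesis on $\partial_t\lambda$, i.e.\ $\langle\partial_t\lambda,1\rangle_\Omega=0$, makes the right-hand side vanish on constants. The identity therefore extends from zero-mean test functions to all of $H^1$, and Lax--Milgram yields a unique zero-mean $\phi$. This step gives $\bm v^*=-\nabla\phi\in L^2_\lambda(\Omega)^d$ and $\bm v^*\in\mathcal V_t$, so the admissible set is nonempty.

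Next comes the orthogonal decomposition. For any $\bm v\in\mathcal V_t$, set $\bm u=\bm v-\bm v^*$. Subtracting the two weak constraints gives $\int_\Omega\lambda\bm u\cdot\nabla w=0$ for all $w\in H^1$. This says exactly that $\bm u$ is $L^2_\lambda$-orthogonal to the gradient subspace $\{\nabla w:w\in H^1\}$; it is the weak weighted-solenoidal, zero-normal-flux component. Taking $w=\phi$ gives the cross term $\int_\Omega\lambda\bm v^*\cdot\bm u=0$. Hence
\begin{equation*}
E[\bm v]=\|\bm v^*+\bm u\|_{L^2_\lambda}^2=\|\bm v^*\|_{L^2_\lambda}^2+\|\bm u\|_{L^2_\lambda}^2\ge E[\bm v^*].
\end{equation*}
Equality holds iff $\int_\Omega\lambda\|\bm u\|^2=0$. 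Because $\lambda\ge\lambda_{\min}>0$, this forces $\bm u=0$ a.e., which gives uniqueness of the minimizer. Uniqueness of $\phi$ up to the zero-mean normalization follows from coercivity. I would also observe that $\mathcal V_t$ is a closed affine subspace. This lets the argument be phrased equivalently as strict convexity of $E$ plus projection, but the Pythagorean identity already suffices.

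The only delicate point, and the main obstacle, is functional-analytic bookkeeping. One must check that the pairing $\langle\partial_t\lambda,w\rangle_\Omega$ is a bounded functional on $H^1(\Omega)$; this is how $H^{-1}(\Omega)$ must be read here, as the dual of $H^1$ rather than of $H^1_0$. One must also check that its compatibility with constants is exactly the zero-mean condition. I would state this interpretation explicitly. I would also remark that $L^2_\lambda$ and $L^2$ norms are equivalent under the bounds on $\lambda$, so that $\nabla\phi\in L^2$ lies in the weighted space and all integrals are finite. The closing remark, that this is a fixed-$t$ selection and not a claim about Wasserstein geodesics, needs no proof beyond noting that the argument never couples distinct times.
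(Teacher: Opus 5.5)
Your proposal is correct and follows essentially the same argument as the paper: show $\bm v^*=-\nabla\phi\in\mathcal V_t$ from the weak Neumann problem, subtract the two constraints to get $\int_\Omega\lambda\bm h\cdot\nabla w\,\mathrm d\bm x=0$ for all $w\in H^1(\Omega)$, test with $w=\phi$, and conclude from the weighted Pythagorean identity. The differences are minor. You re-derive the Lax--Milgram step inline instead of citing Theorem~\ref{Theorem1}. You also spell out two details the paper leaves implicit: the extension of the test class from zero-mean functions to all of $H^1(\Omega)$ using the compatibility condition, and the use of $\lambda\ge\lambda_{\min}>0$ to turn equality of energies into $\bm h=\bm 0$ almost everywhere.
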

	\begin{proof}
Theorem~\ref{Theorem1} establishes a weak potential $\phi$, unique after fixing the zero-mean gauge. With $\bm v^*=-\nabla\phi$, the weak form of \eqref{eq:poisson_general} gives
\begin{equation}
\int_{\Omega}\lambda\bm v^*\cdot\nabla w\,\mathrm d\bm x
=\left\langle\partial_t\lambda,w\right\rangle_{\Omega},
\qquad \forall w\in H^1(\Omega),
\end{equation}
so $\bm v^*\in\mathcal V_t$. For any other $\bm v\in\mathcal V_t$, let $\bm h=\bm v-\bm v^*$. Subtracting the two weak constraints yields
\begin{equation}
\int_{\Omega}\lambda\bm h\cdot\nabla w\,\mathrm d\bm x=0,
\qquad \forall w\in H^1(\Omega).
\end{equation}
Taking $w=\phi$ gives $\int_{\Omega}\lambda\bm v^*\cdot\bm h\,\mathrm d\bm x=0$. Hence,
\begin{equation}
\|\bm v\|_{L^2_{\lambda}}^2
=\|\bm v^*\|_{L^2_{\lambda}}^2
+\|\bm h\|_{L^2_{\lambda}}^2.
\label{eq:weighted_energy_decomposition}
\end{equation}
Therefore $\bm v^*$ is the unique minimum-energy element of $\mathcal V_t$. This weighted orthogonality argument gives the pointwise-in-time minimizer of the kinetic-energy integrand used in the Benamou--Brenier formulation \cite{Benamou2000}; no finite-horizon geodesic claim is required.
	\end{proof}
	
\begin{remark}
	The reconstructed field $\bm v^*$ is a macroscopic transport representative rather than a reconstruction of individual user velocities. If a finer-scale field is written as $\bm v_{\mathrm{true}}=\bm v^*+\bm v_{\mathrm{sol}}$, where $\nabla\cdot(\lambda\bm v_{\mathrm{sol}})=0$ and $\bm v_{\mathrm{sol}}$ has zero normal flux, then $\bm v_{\mathrm{sol}}$ does not alter the observed density evolution and is not identifiable from $\lambda$ alone. It may nevertheless affect microscopic mobility or channel performance, whose quantitative evaluation requires additional trajectory or channel information.
\end{remark}
	
	Substituting the optimal solution $\bm{v}^* = -\nabla \phi$ into \eqref{eq:continuity_conservative}, we derive the governing Poisson-like PDE for the flow potential:
	\begin{equation} 
		\frac{\partial \lambda(\bm{x},t)}{\partial t} = \nabla \cdot \left( \lambda(\bm{x},t) \nabla \phi(\bm{x},t) \right).
		\label{eq:conservative_pde}
	\end{equation}
	This equation forms the theoretical core of this paper, establishing a deterministic link between the observable density evolution and the latent scalar potential.

For known source and boundary data, the same minimum-energy construction leads to
\begin{align}
	\nabla\cdot(\lambda\nabla\phi)&=\partial_t\lambda-S
	&&\text{in }\Omega,\label{eq:potential_general_source}\\
	(\lambda\nabla\phi)\cdot\bm n&=-g
	&&\text{on }\partial\Omega.
	\label{eq:potential_general_boundary}
\end{align}
The boundary condition is understood in the weak conormal sense.
For this case, let $\mathcal V_t^{S,g}$ denote all $\bm v\in L^2_{\lambda}(\Omega)^d$ satisfying
\begin{equation}
\begin{aligned}
\int_{\Omega} \! \lambda \bm v \! \cdot \! \nabla w\,\mathrm d\bm x
\!=\!
\langle\partial_t\lambda \!-\! S,w\rangle_{\Omega}
\!+\!
\langle g,w\rangle_{\partial\Omega},
\forall w \! \in \! H^1(\Omega).
\end{aligned}
\label{eq:admissible_velocity_general}
\end{equation}
Under the mass-compatibility condition \eqref{eq:mass_compatibility_general}, the same weighted-orthogonality argument applies, so $\bm v^*=-\nabla\phi$ is the unique minimum-energy element. Omitting a nonzero $S$ or $g$ perturbs the forcing or boundary functional but not the operator.

	
	\subsection{Existence and Uniqueness of Scalar Potential}\label{sec:analytical_solution} 
	
	 Equation \eqref{eq:conservative_pde} constitutes a linear elliptic PDE for the scalar potential $\phi(\bm{x},t)$. We now establish the mathematical well-posedness of this field reconstruction.
	
	\begin{Theorem}\label{Theorem1}
Let $\Omega\subset\mathbb R^d$ be a bounded connected Lipschitz domain. At a fixed time, assume $\lambda\in L^{\infty}(\Omega)$ satisfies $0<\lambda_{\min}\leq\lambda\leq\lambda_{\max}<\infty$ almost everywhere, and let $\partial_t\lambda\in H^{-1}(\Omega)$ satisfy $\langle\partial_t\lambda,1\rangle_{\Omega}=0$. Consider
\begin{align}
	\nabla\cdot(\lambda\nabla\phi)&=\partial_t\lambda
	&&\text{in }\Omega,\label{eq:poisson_general}\\
	(\lambda\nabla\phi)\cdot\bm n&=0
	&&\text{on }\partial\Omega,
	\label{eq:neumann_bc}
\end{align}
where the homogeneous Neumann condition is interpreted in the same weak conormal sense as above.
Then there exists a weak solution $\phi\in H^1(\Omega)$ unique up to an additive constant. Consequently, the selected field $\bm v^*=-\nabla\phi$ is unique in $L^2_{\lambda}(\Omega)^d$.
	\end{Theorem}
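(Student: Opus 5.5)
The plan is a Lax--Milgram argument on the quotient space of zero-mean $H^1$ functions. First I will fix the weak formulation. A function $\phi\in H^1(\Omega)$ is a weak solution of \eqref{eq:poisson_general}--\eqref{eq:neumann_bc} if
\begin{equation}
\int_{\Omega}\lambda\nabla\phi\cdot\nabla w\,\mathrm d\bm x=-\langle\partial_t\lambda,w\rangle_{\Omega},\qquad\forall w\in H^1(\Omega).
\end{equation}
This is obtained formally by multiplying by $w$, integrating by parts, and dropping the boundary term through the conormal condition. Its sign is the one that makes $\bm v^*=-\nabla\phi$ satisfy the constraint in \eqref{eq:admissible_velocity_set}.

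Since $\partial_t\lambda$ is paired with $H^1(\Omega)$ test functions, I will read $H^{-1}(\Omega)$ here as the dual of $H^1(\Omega)$, which is the convention implicit in the pairing. Both sides of the weak identity vanish when $w$ is constant: the left side trivially, and the right side by the hypothesis $\langle\partial_t\lambda,1\rangle_{\Omega}=0$. Hence it suffices to solve the problem on
\begin{equation}
H^1_{\diamond}(\Omega)=\{w\in H^1(\Omega):\textstyle\int_\Omega w\,\mathrm d\bm x=0\}.
\end{equation}
Any $w\in H^1(\Omega)$ splits as a constant plus an element of $H^1_{\diamond}(\Omega)$, so testing on $H^1_{\diamond}$ recovers the identity for all of $H^1$.

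Next I will verify the hypotheses of Lax--Milgram on $H^1_{\diamond}(\Omega)$, which is a closed subspace of $H^1(\Omega)$ and hence a Hilbert space.

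\textbf{Bilinear form.} Take $a(\phi,w)=\int_\Omega\lambda\nabla\phi\cdot\nabla w\,\mathrm d\bm x$.

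\textbf{Continuity.} The bound $|a(\phi,w)|\le\lambda_{\max}\|\nabla\phi\|_{L^2}\|\nabla w\|_{L^2}$ follows from $\lambda\le\lambda_{\max}$.

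\textbf{Coercivity.} This is the main obstacle, and it is where connectedness and the Lipschitz boundary are used. I will invoke the Poincar\'e--Wirtinger inequality, $\|w\|_{L^2}\le C_P\|\nabla w\|_{L^2}$ for $w\in H^1_{\diamond}(\Omega)$. It holds on bounded connected Lipschitz domains by the Rellich compact embedding together with the standard contradiction argument: a function with vanishing gradient on a connected set is constant, and a zero-mean constant is zero. Combining it with $\lambda\ge\lambda_{\min}$ gives
\begin{equation}
a(w,w)\ge\lambda_{\min}\|\nabla w\|_{L^2}^2\ge\frac{\lambda_{\min}}{1+C_P^2}\|w\|_{H^1}^2 .
\end{equation}

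\textbf{Right-hand side.} The functional $w\mapsto-\langle\partial_t\lambda,w\rangle_{\Omega}$ is bounded on $H^1$ by assumption, so it is bounded on $H^1_{\diamond}$.

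Lax--Milgram then yields a unique $\phi\in H^1_{\diamond}(\Omega)$ satisfying the weak identity. By the constant-splitting remark, this $\phi$ also satisfies the identity for every $w\in H^1(\Omega)$, so it is a weak solution.

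\textbf{Uniqueness up to constants.} Let $\phi_1,\phi_2$ be two weak solutions and set $\psi=\phi_1-\phi_2$. Then $a(\psi,w)=0$ for all $w$. Taking $w=\psi$ gives $\int_\Omega\lambda|\nabla\psi|^2\,\mathrm d\bm x=0$, and since $\lambda\ge\lambda_{\min}>0$ this forces $\nabla\psi=0$ almost everywhere. Because $\Omega$ is connected, $\psi$ is constant. It follows that $\nabla\phi$ is uniquely determined in $L^2(\Omega)^d$.

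Finally, the bounds on $\lambda$ make the norm $\|\cdot\|_{L^2_\lambda}$ equivalent to the unweighted $L^2$ norm. Therefore $\bm v^*=-\nabla\phi$ belongs to $L^2_\lambda(\Omega)^d$ and is unique there, which completes the statement.
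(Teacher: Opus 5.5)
Your proposal is correct and follows the same route as the paper: you fix the gauge on $H^1_{\diamond}(\Omega)$, obtain coercivity from $\lambda\geq\lambda_{\min}$ and the Poincar\'e--Wirtinger inequality, apply Lax--Milgram, and conclude that $\nabla\phi$, hence $\bm v^*$, is unique while $\phi$ is unique only up to constants. You also use $\langle\partial_t\lambda,1\rangle_{\Omega}=0$ explicitly to extend the weak identity from zero-mean test functions to all $w\in H^1(\Omega)$; the paper leaves this step implicit but later needs it to place $\bm v^*$ in $\mathcal V_t$, so including it is a worthwhile addition.
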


\begin{proof}
	Fix the gauge by working on $H^1_{\diamond}(\Omega)=\{u\in H^1(\Omega):\int_{\Omega}u\,\mathrm d\bm{x}=0\}$. The weak problem is
		\begin{equation}
			\int_{\Omega}\lambda\nabla\phi\cdot\nabla w\,\mathrm d\bm{x}
			=-\left\langle\partial_t\lambda,w\right\rangle_{\Omega},
			\qquad \forall w\in H^1_{\diamond}(\Omega).
			\label{eq:weak_potential_problem}
		\end{equation}
		The bilinear form is bounded and, because $\lambda\geq\lambda_{\min}>0$, coercive on $H^1_{\diamond}(\Omega)$ by the Poincar\'e inequality. The right-hand side is a bounded functional on the same space. The Lax--Milgram theorem therefore gives a unique zero-mean solution. Adding an arbitrary spatial constant yields the usual Neumann equivalence class, while $\nabla\phi$ and hence $\bm v^*$ remain unique. The zero-mean condition on $\partial_t\lambda$ is the Neumann compatibility condition obtained by testing the balance with the constant function.
\end{proof}

\noindent\textit{Pointwise-in-time scope:} Theorem~\ref{Theorem1} applies separately at each time for which the stated coefficient, forcing, compatibility, and boundary assumptions hold. It does not by itself establish temporal continuity or differentiability of $t\mapsto\phi(\cdot,t)$ or $t\mapsto\bm v^*(\cdot,t)$. Such temporal regularity requires additional assumptions on the time dependence of $\lambda$, $S$, and $g$.

	\begin{Corollary} \label{cor:open_boundary_source}
Under the same positivity and domain assumptions as Theorem~\ref{Theorem1}, let $f=\partial_t\lambda-S\in H^{-1}(\Omega)$ and $g\in H^{-1/2}(\partial\Omega)$ satisfy
\begin{equation}
\langle f,1\rangle_{\Omega}=-\langle g,1\rangle_{\partial\Omega}.
\label{eq:weak_mass_compatibility}
\end{equation}
Then \eqref{eq:potential_general_source}--\eqref{eq:potential_general_boundary} admit a weak solution unique up to an additive constant. Thus, crossings of the outer observation boundary are represented by $g\neq0$, rather than by imposing a reflecting edge. An internal cell or tracking-area interface $\Gamma\subset\Omega$ is treated separately through the flux integral in Proposition~\ref{Prop_Flux_Integral}; only crossings of $\partial\Omega$ enter $g$.
\end{Corollary}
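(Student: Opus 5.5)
The plan is to repeat the Lax--Milgram argument of Theorem~\ref{Theorem1} on the same gauge-fixed space $H^1_{\diamond}(\Omega)$, now with a right-hand side that contains both the volume forcing $f$ and the boundary datum $g$. First I will write the weak formulation. Multiplying \eqref{eq:potential_general_source} by a test function $w\in H^1(\Omega)$, integrating by parts formally, and inserting the conormal condition \eqref{eq:potential_general_boundary} gives
\begin{equation}
\int_{\Omega}\lambda\nabla\phi\cdot\nabla w\,\mathrm d\bm x
=-\langle f,w\rangle_{\Omega}-\langle g,w\rangle_{\partial\Omega}
\triangleq \ell(w),
\qquad \forall w\in H^1(\Omega).
\label{eq:weak_open_problem_plan}
\end{equation}
With $\bm v^*=-\nabla\phi$ this is exactly the constraint defining $\mathcal V_t^{S,g}$ in \eqref{eq:admissible_velocity_general}. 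A solution of \eqref{eq:weak_open_problem_plan} is taken as the definition of a weak solution of \eqref{eq:potential_general_source}--\eqref{eq:potential_general_boundary}.

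Next I will show that $\ell$ is a bounded functional on $H^1(\Omega)$. The volume term satisfies $|\langle f,w\rangle_{\Omega}|\le\|f\|_{H^{-1}}\|w\|_{H^1}$, where $H^{-1}(\Omega)$ is read, as the pairing with all of $H^1(\Omega)$ requires, as the dual of $H^1(\Omega)$. For the boundary term I will use the trace theorem on bounded Lipschitz domains, which says that $\gamma:H^1(\Omega)\to H^{1/2}(\partial\Omega)$ is bounded. This gives $|\langle g,w\rangle_{\partial\Omega}|\le\|g\|_{H^{-1/2}}\|\gamma\|\,\|w\|_{H^1}$. The bilinear form $a(\phi,w)=\int_\Omega\lambda\nabla\phi\cdot\nabla w$ is bounded by $\lambda_{\max}$. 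It is coercive on $H^1_{\diamond}(\Omega)$ because $a(u,u)\ge\lambda_{\min}\|\nabla u\|^2_{L^2}\ge c\|u\|_{H^1}^2$ by the Poincar\'e--Wirtinger inequality, which holds since $\Omega$ is connected and Lipschitz. Lax--Milgram then gives a unique $\phi\in H^1_{\diamond}(\Omega)$ that satisfies \eqref{eq:weak_open_problem_plan} for every test function $w\in H^1_{\diamond}(\Omega)$.

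The step that needs care is extending the identity from $H^1_{\diamond}$ to all of $H^1(\Omega)$. This is where \eqref{eq:weak_mass_compatibility} is used. Any $w\in H^1(\Omega)$ can be written as $w=w_\diamond+c$, with $c=|\Omega|^{-1}\int_\Omega w$ and $w_\diamond\in H^1_\diamond(\Omega)$. The left side of \eqref{eq:weak_open_problem_plan} does not change when $w$ is shifted by the constant $c$. On the right side, $\ell(c)=-c\bigl(\langle f,1\rangle_\Omega+\langle g,1\rangle_{\partial\Omega}\bigr)=0$ by the compatibility condition. Hence the identity holds for all $w\in H^1(\Omega)$. Compatibility is also necessary: taking $w=1$ forces $\ell(1)=0$.

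For uniqueness up to constants, suppose two solutions exist. Their difference $u$ satisfies $a(u,w)=0$ for all $w$. Taking $w=u$ gives $\nabla u=0$, so $u$ is constant because $\Omega$ is connected. Consequently $\bm v^*$ is unique. The remaining statements in the corollary, that outer-boundary crossings enter only through $g$ and that internal interfaces are treated by a separate flux integral, are interpretive and need no proof.
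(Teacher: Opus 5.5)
Your proposal is correct and follows the route the paper intends. The paper gives no separate proof of Corollary~\ref{cor:open_boundary_source}; it relies on rerunning the Lax--Milgram argument of Theorem~\ref{Theorem1} on $H^1_{\diamond}(\Omega)$ with the boundary functional added, and your explicit use of the trace theorem and of \eqref{eq:weak_mass_compatibility} to extend the identity from $H^1_{\diamond}(\Omega)$ to all of $H^1(\Omega)$ fills in steps the paper leaves implicit.
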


\begin{remark}\label{rem:forcing_stability}
	Let $(\lambda,f,g)$ and $(\widehat\lambda,\widehat f,\widehat g)$ satisfy the compatibility condition, where $f=\partial_t\lambda-S$, and assume $\lambda,\widehat\lambda\geq\underline\lambda>0$ almost everywhere. Let $\phi$ and $\widehat\phi$ be the corresponding zero-mean weak solutions, and define $\Delta\lambda=\widehat\lambda-\lambda\in L^\infty(\Omega)$, $\Delta f=\widehat f-f\in L^2(\Omega)$, $\Delta g=\widehat g-g\in H^{-1/2}(\partial\Omega)$, and $\Delta\phi=\widehat\phi-\phi$. Subtracting the two weak formulations and testing with $\Delta\phi$ gives
		\begin{align}		\label{eq:forcing_stability}
			\|\nabla\Delta\phi\|_{L^2(\Omega)}
			\leq\frac{1}{\underline\lambda}\big(&C_{\mathrm P}\|\Delta f\|_{L^2(\Omega)}+C_{\mathrm{tr}}\|\Delta g\|_{H^{-1/2}(\partial\Omega)} \nonumber \\
			&+\|\Delta\lambda\|_{L^\infty(\Omega)}\|\nabla\phi\|_{L^2(\Omega)}\big).
		\end{align}
Here, $C_{\mathrm P}$ and $C_{\mathrm{tr}}$ are constants associated with the Poincar\'e and trace inequalities on $\Omega$, respectively. Temporal-differentiation and source-estimation errors enter through $\Delta f$, boundary-flux errors enter through $\Delta g$, and density-estimation errors enter through the coefficient perturbation $\Delta\lambda$. Source variations that are resolved and included in $S$ do not affect elliptic well-posedness, whereas unresolved variations may produce aliasing in $\Delta f$. If $S$ is unknown, source activity and transport are not identifiable from $\lambda$ alone.
\end{remark}

	\begin{Corollary}\label{Cor_Symmetric}  
		In the source-free, no-flux setting of Theorem~\ref{Theorem1}, for a sufficiently regular radially symmetric intensity field $\lambda(r,t)$ on a disk of radius $r_{\max}$, the selected radial velocity $v_r(r,t)$ at radius $r$ admits the closed-form integral representation:
		\begin{equation} 
			v_r(r,t) = - \frac{1}{r \lambda(r,t)} \int_0^r \frac{\partial \lambda(\rho, t)}{\partial t} \rho \,\mathrm{d}\rho.	\label{eq:v_solution_integral_form}
		\end{equation}
	\end{Corollary}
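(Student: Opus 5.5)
The plan is to use the uniqueness part of Theorem~\ref{Theorem1}. We construct an explicit radial candidate potential, check that it is a weak solution of \eqref{eq:poisson_general}--\eqref{eq:neumann_bc}, and then conclude that its gradient is the selected field $\bm v^*$. Write $\Omega=\{\|\bm x\|<r_{\max}\}$ in $d=2$ with polar coordinates. Assume, as ``sufficiently regular'' permits, that $\lambda(\cdot,t)$ and $\partial_t\lambda(\cdot,t)$ are continuous radial functions on $[0,r_{\max}]$. Also assume $\lambda\ge\lambda_{\min}>0$.

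First, I will define the radial flux function
\begin{equation*}
F(r)\triangleq\int_0^r \partial_t\lambda(\rho,t)\,\rho\,\mathrm d\rho .
\end{equation*}
The compatibility condition $\langle\partial_t\lambda,1\rangle_\Omega=0$ reads $2\pi F(r_{\max})=0$. I then set $\phi(r)=\int_0^r F(s)/(s\lambda(s,t))\,\mathrm ds$, so that $\lambda\,\partial_r\phi=F(r)/r$. Near the origin $F(r)=O(r^2)$, so $F(r)/(r\lambda)=O(r)$. Hence $\nabla\phi=\partial_r\phi\,\bm e_r$ is bounded and continuous, and $\phi\in H^1(\Omega)$.

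Second, I will verify the strong form. For radial fields,
\begin{equation*}
\nabla\cdot(\lambda\nabla\phi)=\frac1r\partial_r\bigl(r\lambda\partial_r\phi\bigr)=\frac1r F'(r)=\partial_t\lambda .
\end{equation*}
The conormal flux on $\partial\Omega$ is $\lambda\partial_r\phi|_{r_{\max}}=F(r_{\max})/r_{\max}=0$. To pass to the weak formulation \eqref{eq:weak_potential_problem}, I integrate $\lambda\nabla\phi\cdot\nabla w$ by parts on the annulus $\epsilon<r<r_{\max}$ and let $\epsilon\to0$. The inner boundary term is bounded by $2\pi\epsilon\cdot O(\epsilon)\cdot\sup|w|$ for smooth $w$, so it vanishes; density of smooth functions in $H^1$ extends the identity to all test functions. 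This inner boundary term at $r=0$ is the only delicate point, and the $O(r)$ bound on the flux handles it.

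Third, uniqueness in Theorem~\ref{Theorem1}, after shifting $\phi$ by a constant to zero mean, shows that this $\phi$ is the weak potential. Proposition~\ref{Prop_Optimal_Transport} then gives $\bm v^*=-\nabla\phi=-\partial_r\phi\,\bm e_r$, so $v_r=-F(r)/(r\lambda(r,t))$, which is \eqref{eq:v_solution_integral_form}. It also follows that the selected field has no angular component, without imposing this a priori.

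If the statement is intended in general dimension $d$, the same argument goes through with $\rho^{d-1}$ and $r^{d-1}$ in place of $\rho$ and $r$. The displayed formula is the planar case.
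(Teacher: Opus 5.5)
Your proposal is correct, but it argues in the opposite direction from the paper. The paper derives the formula. It writes the equation of Theorem~\ref{Theorem1} in polar coordinates as $\frac{1}{r}\partial_r(r\lambda v_r)=-\partial_t\lambda$, integrates from $0$ to $r$, and imposes the non-singular condition $\lim_{r\to0}r\lambda v_r=0$. That is short and shows directly how the closed form arises: enclosed accumulation divided by boundary density. It tacitly assumes, however, that the selected field is purely radial and depends only on $r$. This is true but needs an argument: for any rotation $R$ of the disk, $\phi\circ R$ solves the same zero-mean weak problem, so uniqueness forces $\phi\circ R=\phi$. The paper also treats the origin behaviour as a hypothesis rather than a consequence, and the outer no-flux condition plays no role in its derivation.

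You instead construct the radial candidate and verify that it is an $H^1(\Omega)$ weak solution of the full Neumann problem. You use $F(r)=O(r^2)$ to remove the inner boundary term on the annulus, and the compatibility condition $F(r_{\max})=0$ to obtain the outer conormal condition. You then let the uniqueness in Theorem~\ref{Theorem1} and Proposition~\ref{Prop_Optimal_Transport} identify the candidate with $\bm v^*$. This gives a rigorous link between the closed form and the minimum-energy selection. The absence of an angular component and the regularity at $r=0$ become conclusions rather than assumptions. Because the verification only involves the flux $\lambda\nabla\phi=F(r)r^{-1}\bm e_r$, it also needs little regularity of $\lambda$ beyond the positivity bounds. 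Your remark on general $d$, with $\rho^{d-1}$ and $r^{d-1}$, is correct and applies equally to the paper's derivation.
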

	
	\begin{proof}
		 In polar coordinates, the conservative PDE \eqref{eq:poisson_general} simplifies to $\frac{1}{r}\frac{\partial}{\partial r}(r \lambda v_r) = -\frac{\partial \lambda}{\partial t}$. 
			Integrating both sides with respect to $\rho$ from $0$ to $r$ 
			and applying the non-singular boundary condition $\lim_{r\to 0} r\lambda v_r = 0$ 
			directly yields \eqref{eq:v_solution_integral_form}. This completes the proof.
	\end{proof}
	
	Theorem~\ref{Theorem1} establishes well-posedness for the stated bounded-domain, positive-density, compatibility, and boundary assumptions; it does not cover vanishing-density regions, unknown source terms, or unrestricted microscopic velocity components. Corollary~\ref{Cor_Symmetric} further provides the  analytical basis for radially symmetric systems. Intuitively, \eqref{eq:v_solution_integral_form} states that the radial velocity is the negative of the enclosed accumulation rate, normalized by the local boundary density. Thus, net accumulation inside radius $r$ corresponds to inward transport ($v_r<0$), whereas net depletion corresponds to outward transport ($v_r>0$). This links the radial transport to the conservation of traffic demand.
	
	\subsection{Analytical Properties and Physical Interpretation}\label{sec:field_properties} 
	
	The selected field $\bm v^*(\bm x,t)$, established under the assumptions of Proposition~\ref{Prop_Optimal_Transport} and Theorem~\ref{Theorem1}, is used in the subsequent F-STSG constructions. It has three relevant properties. First, Corollary~\ref{Cor_Symmetric} provides a closed-form representation for the radial setting. Second, the field is consistent with the balance law through the relation between $\partial_t\lambda$ and $\nabla\cdot(\lambda\bm v^*)$. Third, under the stated positivity and regularity assumptions, the weak solution depends continuously on the coefficient, forcing, and boundary data, as quantified in Remark~\ref{rem:forcing_stability}. These properties establish a well-defined kinematic reconstruction but do not by themselves guarantee forecasting or control performance, which also depends on observation error, latency, prediction, and the downstream controller.
	
	This selected field $\bm{v}^*(\bm{x},t)$ serves as the basis for the perturbation analysis in Section~\ref{sec:linear_response} and the  definition of the  information flux in Section~\ref{sec:metrics_toolbox}.

\subsection{Reconstruction Workflow and Computational Requirements}\label{sec:solver_workflow}

Algorithm~\ref{alg:inverse_solver} summarizes the reconstruction performed at one update time, from aggregate-field estimation and compatibility checking to the velocity and flux fields supplied to a separate application-specific controller. The corresponding post-aggregation signaling and computational requirements are described below. Let $M_{\mathrm g}$ denote the number of spatial cells used by the reconstruction grid.
\begin{algorithm}[t]
	\small
	\caption{F-STSG field reconstruction at one update time}
	\label{alg:inverse_solver}
	\begin{algorithmic}[1]
		\Require Aggregated observations at $t_q$, domain $\Omega$, source estimate $\widehat S(\cdot,t_q)$, and boundary-flux estimate $\widehat g(\cdot,t_q)$
		\State Estimate and regularize $\widehat\lambda(\cdot,t_q)$ on the $M_{\mathrm g}$-cell spatial grid
		\State Estimate $\widehat{\partial_t\lambda}(\cdot,t_q)$ from the available regularized field estimates
		\State Evaluate the discrete mass-compatibility residual associated with \eqref{eq:mass_compatibility_general} and apply any prescribed compatibility correction
		\State Solve the weighted elliptic problem for $\widehat\phi(\cdot,t_q)$ under the gauge $\int_{\Omega}\widehat\phi(\bm x,t_q)\,\mathrm d\bm x=0$
		\State Compute $\widehat{\bm v}^{*}(\cdot,t_q)=-\nabla\widehat\phi(\cdot,t_q)$ and $\widehat{\bm J}(\cdot,t_q)=\widehat\lambda(\cdot,t_q)\widehat{\bm v}^{*}(\cdot,t_q)$
		\State Return the reconstructed fields and required application metrics to a separate controller
	\end{algorithmic}
\end{algorithm}

Let $t_q$ denote the $q$th field-update time, with $t_{q+1}-t_q=\Delta t_{\mathrm{upd}}$. The continuous-time notation therefore does not imply continuous transmission of individual trajectories. At each update, the reconstruction module receives spatially aggregated observations rather than individual trajectories.

Consider a fixed-length uncompressed report containing one density value for each of the $M_{\mathrm g}$ spatial cells, $M_S$ scalar source summaries, and $M_{\partial}$ scalar boundary-flux summaries. Let $b_{\lambda}$, $b_S$, and $b_g$ denote the numbers of bits used to represent one density, source, and boundary-flux value, respectively. The post-aggregation payload and signaling rate are
	\begin{equation}
		B_{\mathrm{upd}}
		=
		M_{\mathrm g}b_{\lambda}
		+
		M_Sb_S
		+
		M_{\partial}b_g
		+
		B_{\mathrm{meta}},
		\,
		R_{\mathrm{sig}}
		=
		\frac{B_{\mathrm{upd}}}{\Delta t_{\mathrm{upd}}},
		\label{eq:signaling_overhead}
\end{equation}
where $B_{\mathrm{upd}}$ is the number of transmitted bits per update, $B_{\mathrm{meta}}$ is the metadata payload in bits, and $R_{\mathrm{sig}}$ is the post-aggregation signaling rate in bits per unit time. The metadata may contain timestamps, region identifiers, and report-format information. The source and boundary terms are omitted when they are known to be zero or are not separately reported. Equation~\eqref{eq:signaling_overhead} does not include the cost of collecting raw observations from terminals, which depends on the telemetry architecture, reporting protocol, and local preprocessing.

In the following complexity statements, $\mathcal O(\cdot)$ denotes asymptotic scaling with respect to the indicated problem-size variables. Let $N_{\mathrm{obs}}$ be the number of observations processed at one update. Under constant-time assignment of each observation to a grid cell, histogram construction requires $\mathcal O(N_{\mathrm{obs}})$ operations. More general spatial estimators may require additional estimator-specific regularization.

Let $L_{\mathrm t}$ be the number of regularized field snapshots retained by the temporal-derivative estimator. Evaluating the derivative over all grid cells requires $\mathcal O(L_{\mathrm t}M_{\mathrm g})$ operations and $\mathcal O(L_{\mathrm t}M_{\mathrm g})$ buffer memory. Let $N_{\mathrm{nz}}$ denote the number of nonzero entries in the discretized weighted elliptic operator and let $N_{\mathrm{it}}$ denote the number of iterative-solver iterations. When each iteration is dominated by sparse matrix operations, the inverse solve requires $\mathcal O(N_{\mathrm{it}}N_{\mathrm{nz}})$ operations. Storing the sparse operator, a constant number of solver vectors, and the temporal field buffer requires
	$\mathcal O\!\left(N_{\mathrm{nz}}+(L_{\mathrm t}+1)M_{\mathrm g}\right)$
	memory.

For a fixed local discretization stencil,
	$N_{\mathrm{nz}}=\mathcal O(M_{\mathrm g})$.
	The iteration count $N_{\mathrm{it}}$ nevertheless depends on the grid resolution, operator conditioning, preconditioner, and stopping tolerance. Forming the reconstructed velocity and flux and evaluating grid-based metrics such as the centroid require an additional $\mathcal O(M_{\mathrm g})$ operations.

A reconstruction can be completed before the next update only if the combined aggregation, estimation, temporal-differentiation, and inverse-solver latency is smaller than $\Delta t_{\mathrm{upd}}$. Prediction and constrained control introduce separate application-dependent latency. As a representative wall-clock benchmark, for the $20\times20$ grid used in the pNEUMA experiment, the inverse-reconstruction stage requires $6.733$~ms per update on average, with a 95th-percentile latency of $8.715$~ms on a workstation equipped with an Intel(R) Core(TM) Ultra 7 155U CPU. The statistics are obtained from 2,095 timed executions over 419 real field updates after warm-up. The measured stage starts from the prepared aggregate field and includes compatibility correction, sparse-system construction and solution, and recovery of $\phi$, $\bm v^*$, and $\bm J$; it excludes telemetry transfer, raw-data preprocessing, field estimation, and downstream control. This workstation-level timing provides a concrete computational reference alongside the asymptotic scaling analysis, but it does not establish end-to-end real-time feasibility, which additionally depends on the telemetry architecture, preprocessing and estimation pipeline, hardware platform, and update interval.
	
	
	\section{Perturbation Analysis and Green's Function Formalism}\label{sec:linear_response} 
	
	 Theorem~\ref{Theorem1} establishes the well-posed inverse problem, but spatially varying coefficients generally preclude closed-form solutions. We therefore derive a first-order approximation for a positive background field in the weak-variation regime $\epsilon\ll1$. The approximation is used only within this regime; outside it, the full weighted elliptic problem is solved directly.
	
\subsection{Linearized Field Equation}\label{sec:linearization} 

We decompose the intensity field $\lambda(\bm x,t)$ into a specified background state $\lambda_0(\bm x,t)$ and a mass-conserving first-order perturbation $\delta\lambda(\bm x,t)$ satisfying $\int_{\Omega}\delta\lambda(\bm x,t)\,\mathrm d\bm x=0$:
\begin{align} 
	\lambda(\bm x,t)&=\lambda_0(\bm x,t)+\epsilon\delta\lambda(\bm x,t),\label{eq:perturb_lambda}\\
	\phi(\bm x,t)&=\phi_0(\bm x,t)+\epsilon\delta\phi(\bm x,t),\label{eq:perturb_phi}
\end{align}
where $\epsilon\ll1$ is the perturbation parameter. The background pair $(\lambda_0,\phi_0)$ satisfies
\begin{equation} 
	\frac{\partial\lambda_0}{\partial t}=\nabla\cdot(\lambda_0\nabla\phi_0).
	\label{eq:zeroth_order}
\end{equation}

\begin{Theorem}\label{Theorem2} 
	Substituting \eqref{eq:perturb_lambda} and \eqref{eq:perturb_phi} into \eqref{eq:conservative_pde} and retaining the $\mathcal O(\epsilon)$ terms gives
	\begin{equation} 
		\nabla\cdot(\lambda_0\nabla\delta\phi)
		=\frac{\partial\delta\lambda}{\partial t}
		-\nabla\cdot(\delta\lambda\nabla\phi_0).
		\label{eq:linearized_pde}
	\end{equation}
\end{Theorem}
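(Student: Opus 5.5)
The plan is a direct formal expansion in powers of $\epsilon$, using the bilinearity of the operator $(\lambda,\phi)\mapsto\nabla\cdot(\lambda\nabla\phi)$ and the linearity of $\partial_t$. First, substitute \eqref{eq:perturb_lambda} and \eqref{eq:perturb_phi} into \eqref{eq:conservative_pde}. The left-hand side becomes
\begin{equation*}
\partial_t\lambda_0+\epsilon\,\partial_t\delta\lambda .
\end{equation*}
The right-hand side expands by bilinearity into
\begin{equation*}
\nabla\cdot(\lambda_0\nabla\phi_0)+\epsilon\bigl[\nabla\cdot(\lambda_0\nabla\delta\phi)+\nabla\cdot(\delta\lambda\nabla\phi_0)\bigr]+\epsilon^2\nabla\cdot(\delta\lambda\nabla\delta\phi).
\end{equation*}

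Next, collect terms order by order. The $\mathcal O(1)$ terms cancel identically by the background relation \eqref{eq:zeroth_order}. Discarding the $\mathcal O(\epsilon^2)$ remainder, dividing the $\mathcal O(\epsilon)$ balance by $\epsilon$, and moving $\nabla\cdot(\delta\lambda\nabla\phi_0)$ to the right-hand side gives exactly \eqref{eq:linearized_pde}.

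To make this more than a formal computation, I would state it in weak form, since Theorem~\ref{Theorem1} is posed weakly. Test against $w\in H^1(\Omega)$ and use the zero-flux conormal conditions for both $\phi_0$ and $\phi$. The first-order Neumann condition then reads
\begin{equation*}
(\lambda_0\nabla\delta\phi+\delta\lambda\nabla\phi_0)\cdot\bm n=0 .
\end{equation*}
The compatibility of the first-order problem, namely that the total forcing has zero mean against the constant test function, follows from $\int_\Omega\delta\lambda\,\mathrm d\bm x=0$ together with this boundary condition. Theorem~\ref{Theorem1}, applied with coefficient $\lambda_0$ (assumed bounded between positive constants), then gives a unique zero-mean $\delta\phi$. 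The functional $w\mapsto\langle\partial_t\delta\lambda,w\rangle-\int\delta\lambda\nabla\phi_0\cdot\nabla w$ is bounded on $H^1$ provided $\delta\lambda\in L^\infty$ and $\phi_0\in H^1$.

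The main obstacle is justifying that the $\mathcal O(\epsilon^2)$ remainder is genuinely negligible, that is, that $\phi$ is differentiable in $\epsilon$ rather than merely admitting the formal expansion. I would handle this with the stability estimate of Remark~\ref{rem:forcing_stability}. Setting $\Delta\lambda=\epsilon\delta\lambda$, that estimate gives $\|\nabla(\phi-\phi_0)\|=\mathcal O(\epsilon)$. Define the residual $r=\phi-\phi_0-\epsilon\delta\phi$. It satisfies a weighted problem with coefficient $\lambda$ and forcing $-\epsilon^2\nabla\cdot(\delta\lambda\nabla\delta\phi)$, plus a term $\epsilon\,\nabla\cdot(\delta\lambda\nabla(\phi-\phi_0-\epsilon\delta\phi))$ that can be absorbed into the left side for small $\epsilon$. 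Coercivity then yields $\|\nabla r\|_{L^2}=\mathcal O(\epsilon^2)$, which confirms that \eqref{eq:linearized_pde} is the correct first-order equation.
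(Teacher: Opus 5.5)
Your first two paragraphs are the paper's proof: substitute the expansions, expand the right-hand side bilinearly, cancel the $\mathcal O(1)$ terms using the background equation \eqref{eq:zeroth_order}, match the $\mathcal O(\epsilon)$ coefficients, and move $\nabla\cdot(\delta\lambda\nabla\phi_0)$ to the right-hand side. The weak-form remainder bound goes beyond the paper, which simply drops the $\mathcal O(\epsilon^2)$ term, and it is sound but simpler than you make it: with coefficient $\lambda$ the residual satisfies exactly $\int_\Omega\lambda\nabla r\cdot\nabla w\,\mathrm d\bm x=-\epsilon^2\int_\Omega\delta\lambda\nabla\delta\phi\cdot\nabla w\,\mathrm d\bm x$, so coercivity (for $\epsilon$ small enough that $\lambda$ stays bounded below) gives $\|\nabla r\|_{L^2}=\mathcal O(\epsilon^2)$ directly, with no appeal to the stability remark and no absorption step (the extra $\epsilon\nabla\cdot(\delta\lambda\nabla r)$ term appears only if you use coefficient $\lambda_0$).
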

	
	\begin{proof}
		 Substituting the asymptotic expansions $\lambda(\bm{x},t) = \lambda_0 + \epsilon \delta\lambda$ and $\phi(\bm{x},t) = \phi_0 + \epsilon \delta\phi$ into the governing PDE \eqref{eq:conservative_pde} yields:
		\begin{equation}\label{eqApB1} 
			\frac{\partial (\lambda_0 + \epsilon \delta\lambda)}{\partial t} = \nabla \cdot \left( (\lambda_0 + \epsilon \delta\lambda) \nabla (\phi_0 + \epsilon \delta\phi) \right).
		\end{equation}
		 Expanding the right-hand side (RHS) of \eqref{eqApB1} and grouping by powers of $\epsilon$, we obtain:
		\begin{align}\label{eqRHS} 
			\text{RHS} =& \nabla \cdot (\lambda_0 \nabla \phi_0) + \epsilon \left( \nabla \cdot (\lambda_0 \nabla \delta\phi) + \nabla \cdot (\delta\lambda \nabla \phi_0) \right) \nonumber \\
			& + \mathcal{O}(\epsilon^2).
		\end{align}
		 The left-hand side (LHS) expands linearly as:
		\begin{equation}\label{eqLHS} 
			\text{LHS} =  \frac{\partial \lambda_0}{\partial t} + \epsilon \frac{\partial (\delta\lambda)}{\partial t}.
		\end{equation}
		 The background state identically satisfies the zeroth-order equation $\partial_t \lambda_0 = \nabla \cdot (\lambda_0 \nabla \phi_0)$. 
			Equating the first-order $\mathcal{O}(\epsilon)$ coefficients from both sides yields:
		\begin{equation} 
			\frac{\partial (\delta\lambda)}{\partial t} = \nabla \cdot (\lambda_0 \nabla \delta\phi) + \nabla \cdot (\delta\lambda \nabla \phi_0).
		\end{equation}
		Rearranging to isolate the weighted elliptic operator $\nabla \cdot (\lambda_0 \nabla \delta\phi)$ directly produces the linearized equation \eqref{eq:linearized_pde}.
	\end{proof}
	
	 Defining the background velocity as $\bm v_0(\bm x,t)\triangleq-\nabla\phi_0(\bm x,t)$, we write the effective first-order forcing as $S_{\mathrm{eff}}\triangleq\partial_t\delta\lambda+\nabla\cdot(\delta\lambda\bm v_0)$. This term represents the first-order mass imbalance induced by the density perturbation. The left-hand side gives the weighted elliptic response of the potential perturbation under the background field $\lambda_0$. The differential operator $\mathcal{L}_0 = \nabla \cdot (\lambda_0 \nabla)$ depends on the specified positive background state, permitting the Green's-function representation under the same boundary and compatibility conditions.
	
	\subsection{Green's Function Integral Representation}\label{sec:greens_function} 
	
	 The linearity of \eqref{eq:linearized_pde} enables the formulation of a general solution via the Green's function method. Let $G(\bm{x}, \bm{y}; t)$ be the generalized Green's function for the background operator $\mathcal{L}_0 = \nabla \cdot (\lambda_0 \nabla)$ at time $t$.
	 Under Neumann boundary conditions, the standard definition $\mathcal{L}_0 G(\bm{x}, \bm{y}; t) = \delta(\bm{x} - \bm{y})$ is ill-posed due to the global mass conservation constraint. We fix the additive constant by imposing the zero-mean gauge $\int_{\Omega}G(\bm x,\bm y;t)\,\mathrm d\bm x=0$. The Neumann Green's function is then defined by
	\begin{equation} 
		\nabla_{\bm{x}} \cdot (\lambda_0(\bm{x},t) \nabla_{\bm{x}} G(\bm{x}, \bm{y}; t)) = \delta(\bm{x} - \bm{y}) - \frac{1}{|\Omega|},
	\end{equation}
	subject to $\partial_{\bm n}G=0$ on $\partial\Omega$, where $\delta(\cdot)$ denotes the Dirac delta function.
	Because the effective source $S_{\mathrm{eff}}$ satisfies $\int_{\Omega} S_{\mathrm{eff}} (\bm{y},t) \,\mathrm{d} \bm{y} = 0$, the constant offset  $-1/|\Omega|$ vanishes upon integration.
	The potential perturbation therefore has the Green's-function integral representation
	\begin{equation} 
		\delta\phi(\bm{x},t) = \int_{\Omega} G(\bm{x}, \bm{y}; t) S_{\mathrm{eff}}(\bm{y},t) \,\mathrm{d}\bm{y}.
		\label{eq:greens_solution}
	\end{equation}
 Consequently, substituting $S_{\mathrm{eff}}$, the velocity perturbation $\delta \bm{v} = -\nabla (\delta \phi)$ becomes:

	\begin{align}
		\delta\bm v(\bm x,t)
		=&-\nabla_{\bm x}\int_{\Omega}G(\bm x,\bm y;t)\Bigl[
		\frac{\partial\delta\lambda(\bm y,t)}{\partial t}\nonumber\\
		&\hspace{15mm}+\nabla_{\bm y}\cdot\bigl(\delta\lambda(\bm y,t)\bm v_0(\bm y,t)\bigr)\Bigr]\,\mathrm d\bm y.
\end{align}
	 Within the linearization regime, $G$ maps the compatible first-order effective source to the first-order velocity perturbation. Outside the small-perturbation and positive-background assumptions, the full weighted elliptic problem must be solved directly.
	
	\subsection{Spectral Decomposition of the Weighted Operator}\label{sec:spectral_properties} 
	
	 The following spectral decomposition is performed at a fixed observation time, and the time arguments of $\mathcal L_0$, $G$, $\psi_n$, and $\nu_n$ are suppressed in this subsection. The background operator $\mathcal L_0=\nabla\cdot(\lambda_0\nabla)$ acts as a generalized Laplacian and is self-adjoint on $L^2(\Omega)$ under the stated Neumann boundary condition.
	
	 According to spectral theory for elliptic operators~\cite{Courant1953}, there exists a countable set of orthonormal eigenfunctions $\{\psi_n(\bm{x})\}$ and corresponding non-negative eigenvalues, $0 = \nu_0 < \nu_1 \le \nu_2 \le \dots$, satisfying:
	\begin{equation} 
		\nabla \cdot (\lambda_0(\bm{x}) \nabla \psi_n(\bm{x})) = -\nu_n \psi_n(\bm{x}),
	\end{equation}
	subject to the Neumann boundary condition $\partial_{\bm n}\psi_n=0$ on $\partial\Omega$. The zero eigenvalue $\nu_0=0$ corresponds to the constant eigenfunction $\psi_0 = |\Omega|^{-1/2}$, which reflects the global mass conservation.
	
	 By the spectral theorem for self-adjoint elliptic operators, the Green's function on the subspace orthogonal to $\psi_0$ admits the expansion
	\begin{equation} 
		G(\bm{x}, \bm{y}) = - \sum_{n=1}^{\infty} \frac{\psi_n(\bm{x}) \psi_n(\bm{y})}{\nu_n}.
		\label{eq:spectral_expansion}
	\end{equation}
	 Substituting \eqref{eq:spectral_expansion} into \eqref{eq:greens_solution}, the potential perturbation becomes:
	
		\begin{equation}
			\delta\phi(\bm{x},t)=-\sum_{n=1}^{\infty}\frac{\psi_n(\bm{x})}{\nu_n}s_n^{\mathrm{eff}}(t),
	\end{equation}
	 where $s_n^{\mathrm{eff}}(t)\triangleq\int_{\Omega}\psi_n(\bm{y})S_{\mathrm{eff}}(\bm{y},t)\,\mathrm d\bm y$ is the projection of the effective source onto the $n$-th eigenmode.
	
	 The expansion shows how the inverse operator weights the spatial eigenmodes of the effective forcing. Each mode contributes to $\delta\phi$ with a factor $1/\nu_n$, so modes with larger $\nu_n$ have a smaller contribution than modes with smaller $\nu_n$. This is a property of the elliptic inverse operator and does not imply a temporal decay rate. This spectral property does not by itself establish robustness to observation noise; finite-count and estimator sensitivity are evaluated separately in Section~\ref{sec:sim_sensitivity}.
	
	\section{Kinematic Analytical Toolbox}\label{sec:metrics_toolbox} 
	
	 Based on the selected macroscopic velocity field $\bm{v}^*(\bm{x},t)$, we formulate a suite of kinematic metrics. These quantities describe the instantaneous transport implied by the balance law. Their use for prediction requires causal estimation and an explicit model of subsequent field evolution. For the classical forms of the differential identities and interface integrals below, we additionally assume that $\lambda$ is continuously differentiable in space and time, $\bm v^*$ is continuously differentiable in space, and the interfaces are piecewise smooth. These additional assumptions are not required for the weak existence and uniqueness result in Theorem~\ref{Theorem1}.

\vspace*{-2mm}
\subsection{Information Flux Vector}\label{sec:metric_flux}

Generalizing the concept of microscopic node kinetics, we define the information flux as a vector field quantifying the spatial advection of the intensity field.

\begin{Definition}\label{Definition4}
	The information flux, $\bm J(\bm x,t)$, is defined as the product of the local intensity and macroscopic velocity field. For reconstructed metrics, the velocity is the selected field $\bm v^*$:
	\begin{equation}
		\bm J(\bm x,t)\triangleq\lambda(\bm x,t)\bm v^*(\bm x,t).
	\end{equation}
\end{Definition}

The information-flux field describes transport of the modeled intensity measure and defines signed flow across an oriented interface $\Gamma$, such as a cell edge or tracking-area boundary.

\begin{Proposition}\label{Prop_Flux_Integral}
	Let $\Gamma\subset\Omega$ be an oriented $(d-1)$-dimensional interface with chosen unit normal $\bm n_{\Gamma}$. The signed transport rate across $\Gamma$ in the direction of $\bm n_{\Gamma}$ is
	\begin{equation}
		J_{\Gamma}(t)
		=
		\int_{\Gamma}
		\bm J(\bm x,t)\cdot\bm n_{\Gamma}\,\mathrm dS.
		\label{eq:handover_flux}
	\end{equation}
\end{Proposition}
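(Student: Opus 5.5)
Proposition~\ref{Prop_Flux_Integral} is essentially a definitional statement: it asserts that the surface integral of the normal component of $\bm J$ over $\Gamma$ is the signed rate at which the modeled intensity measure crosses $\Gamma$. I would therefore prove it by connecting the integral to the balance law \eqref{eq:continuity_general}. The connection comes from the population of a region bounded by $\Gamma$, using the classical regularity assumptions stated at the opening of Section~\ref{sec:metrics_toolbox}.

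\textbf{Step 1: local flux density.} I first consider a small piece of $\Gamma$ with area element $\mathrm dS$ and normal $\bm n_\Gamma$. Over a short interval $\delta t$, intensity moving with velocity $\bm v^*$ sweeps through a slab of volume $(\bm v^*\cdot\bm n_\Gamma)\,\mathrm dS\,\delta t$. This slab carries mass $\lambda\,\bm v^*\cdot\bm n_\Gamma\,\mathrm dS\,\delta t+o(\delta t)$, which is counted positive in the direction of $\bm n_\Gamma$. Dividing by $\delta t$ and integrating over $\Gamma$ gives \eqref{eq:handover_flux} as the net crossing rate, with $\bm J=\lambda\bm v^*$ as in Definition~\ref{Definition4}.

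\textbf{Step 2: consistency with the balance law.} To make Step 1 rigorous rather than heuristic, I would take $\Gamma$ to be a portion of the boundary of a subregion $A\subset\Omega$ with piecewise smooth boundary. Integrating \eqref{eq:continuity_general} over $A$ and applying the divergence theorem, which is valid under the $C^1$ assumptions, gives
\begin{equation*}
\frac{\mathrm d}{\mathrm dt}\int_A\lambda\,\mathrm d\bm x=\int_A S\,\mathrm d\bm x-\int_{\partial A}\bm J\cdot\bm n_{A}\,\mathrm dS .
\end{equation*}
In this identity the boundary term is exactly the outflow rate attributable to transport. It decomposes additively over the pieces of $\partial A$, and with $\bm n_\Gamma=\bm n_A$ the piece on $\Gamma$ is $J_\Gamma(t)$. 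Reversing the orientation flips the sign, which confirms the signed interpretation. The result also agrees with the boundary datum: taking $\Gamma=\partial\Omega$ recovers $\int_{\partial\Omega} g\,\mathrm dS$ from \eqref{eq:mass_compatibility_general}.

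\textbf{Main obstacle.} The delicate point is identifiability. By the Remark following Proposition~\ref{Prop_Optimal_Transport}, adding a weighted-solenoidal $\bm v_{\mathrm{sol}}$ with zero normal flux on $\partial\Omega$ leaves the density unchanged, yet it can change $\int_\Gamma\lambda\bm v_{\mathrm{sol}}\cdot\bm n_\Gamma\,\mathrm dS$ on an internal open interface. I would therefore state that $J_\Gamma$ is the crossing rate of the selected representative $\bm v^*$. Then I would show that it is density-identifiable whenever $\Gamma$ together with part of $\partial\Omega$ closes up a region $A$. In that case the Step 2 identity expresses $J_\Gamma$ through $\partial_t\lambda$, $S$, and $g$ alone, since any $\bm v_{\mathrm{sol}}$ contributes zero by the divergence theorem. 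For a general open $\Gamma$ the value is tied to the minimum-energy selection, and I would state that qualification explicitly.
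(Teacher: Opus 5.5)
Your proposal is correct, but it argues more than the paper does. The paper treats Proposition~\ref{Prop_Flux_Integral} as a definition. Its whole proof is that the result follows by integrating the normal component of $\bm J$ over the oriented $\Gamma$, so $J_\Gamma$ is simply \emph{named} the signed transport rate.

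\textbf{How your steps relate to the paper.} Your Step~1 supplies the physical motivation for that naming. Your Step~2 is essentially the computation the paper performs only after the proposition, when it integrates the balance law over $V_i\cap\Omega$ to get \eqref{eq:cell_population_balance}. You use it to justify the interpretation rather than as a consequence. Step~2 rigorously shows only that the boundary pieces \emph{sum} to the transport outflow of $A$. Attributing the $\Gamma$-piece specifically to crossings of $\Gamma$ still rests on the local reading of Step~1. The statement therefore stays partly definitional, which is acceptable.

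\textbf{What your route adds.} The main gain is the identifiability qualification, which the paper leaves implicit in its remark on $\bm v_{\mathrm{sol}}$. When $\Gamma$, possibly together with part of $\partial\Omega$, bounds a subregion $A$, $J_\Gamma$ is fixed by $\partial_t\lambda$, $S$, and $g$ alone. For a non-separating open $\Gamma$, it can depend on the minimum-energy selection. This is correct and useful. For an interface that separates $\Omega$, a held-out crossing comparison such as the pNEUMA test therefore probes the density and boundary-flux estimates rather than the selection principle itself.

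\textbf{Two minor points.} First, the check with $\Gamma=\partial\Omega$ violates the hypothesis $\Gamma\subset\Omega$; it is better phrased as taking $A=\Omega$. Second, the claim that $\bm v_{\mathrm{sol}}$ contributes zero through $\Gamma$ needs either the classical regularity assumed in Section~\ref{sec:metrics_toolbox} or an approximation argument. This is because the indicator of $A$ is not an $H^1$ test function in the weak admissibility condition.
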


\begin{proof}
	The result follows by integrating the normal component of $\bm J$ over the oriented interface $\Gamma$.
\end{proof}

The flux integral $J_{\Gamma}(t)$ characterizes instantaneous transport across an internal interface. For a fixed association partition $\{V_i\}$, integrating the continuity equation over $V_i\cap\Omega$ gives
\begin{equation}
	\frac{\mathrm{d}}{\mathrm{d}t}\int_{V_i\cap\Omega}\lambda\,\mathrm{d}\mathbf{x}
	=\int_{V_i\cap\Omega}S\,\mathrm{d}\mathbf{x}
	-\sum_{j\in\mathcal{N}_i}J_{ij}-J_{i,\mathrm{ext}},
	\label{eq:cell_population_balance}
\end{equation}
where $\mathcal N_i$ denotes the neighboring regions, $J_{ij}$ is the signed flux from $V_i$ to $V_j$ defined by Proposition~\ref{Prop_Flux_Integral}, and $J_{i,\mathrm{ext}}$ is the outward flux through $\partial V_i\cap\partial\Omega$.

\begin{remark}\label{rem:ase_link}
	The reconstructed transport field can also be coupled to a separately specified wireless-performance model. Let $\operatorname{SINR}(\bm x,t)$ denote the local signal-to-interference-plus-noise ratio, and let
	\[
	R(\bm x,t)
	=
	\mathbb E\!\left[
	\log_2\!\left(1+\operatorname{SINR}(\bm x,t)\right)
	\right]
	\]
	denote the local mean spectral efficiency under a specified association, activity, scheduling, channel, and interference model. When $\lambda(\bm x,t)$ represents the active-link density, the local area spectral efficiency is $\mathcal A(\bm x,t)=\lambda(\bm x,t)R(\bm x,t)$. Combining this definition with \eqref{eq:continuity_general} gives
	\[
	\frac{\partial\mathcal A}{\partial t}
	+
	\nabla\cdot(R\bm J)
	=
	RS
	+
	\lambda\frac{\partial R}{\partial t}
	+
	\bm J\cdot\nabla R.
	\]
	Thus, $R\bm J$ provides a spectral-efficiency-weighted transport flux, while $R$ itself is determined by the separate wireless-performance model. If $\lambda$ denotes demand rather than active-link density, an additional activity or scheduling map is required.
	For a compatible finer-scale field $\bm v_{\mathrm{true}}=\bm v^*+\bm v_{\mathrm{sol}}$ with $\nabla\cdot(\lambda\bm v_{\mathrm{sol}})=0$, the omitted component contributes
	\[
		\nabla\cdot(R\lambda\bm v_{\mathrm{sol}})
		-
		\lambda\bm v_{\mathrm{sol}}\cdot\nabla R
		=
		R\nabla\cdot(\lambda\bm v_{\mathrm{sol}})
		=
		0
	\]
	to the area-spectral-efficiency balance for a specified $R$. The same circulation may nevertheless alter $R$ through microscopic channel or interference dynamics; such an effect is not identifiable from $\lambda$ alone and requires additional mobility or channel information. Hence, this relation provides a kinematic interface to, rather than a replacement for, a wireless-performance model.
\end{remark}

Related fluid-kinetic formulations have also been used 	for loop-free routing in ultra-dense networks \cite{Dong2026VDR}, physics-native digital-twin synchronization in kinetic swarms \cite{DongTDT2026}, and time-varying interference modeling in LEO	mega-constellations \cite{DongLEOKIF2026}.

\vspace*{-2mm}	
\subsection{Topological Compression and Material Derivative}\label{sec:metric_divergence} 
	
	To characterize the formation of demand hotspots at the macroscopic level, we analyze the compressibility of the selected flow.
	
	\begin{Definition}\label{Definition5} 
		The congestion divergence, $\mathcal{D}(\bm{x},t)$, is the scalar field defined by the divergence of the macroscopic velocity:
		\begin{equation} 
			\mathcal{D}(\bm{x},t) \triangleq \nabla \cdot \bm{v}^{*}(\bm{x},t).
		\end{equation}
	\end{Definition}
	
	\begin{Proposition}\label{Prop_Material_Derivative} 
		For a conservative system ($S=0$), the congestion divergence is mathematically equivalent to the negative logarithmic rate of change of the intensity, as observed in the Lagrangian frame of reference:
		\begin{equation}\label{eqPro3} 
			\mathcal{D}(\bm{x},t) = - \frac{1}{\lambda(\bm{x},t)} \frac{D\lambda(\bm{x},t)}{Dt},
		\end{equation}
		where $\frac{D}{Dt} \triangleq \frac{\partial}{\partial t} + \bm{v}^{*} \cdot \nabla$ is the material (or Lagrangian) derivative, capturing the rate of change experienced by a moving fluid element.
	\end{Proposition}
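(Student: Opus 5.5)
The plan is to expand the divergence in the conservative balance law with the product rule and then divide by $\lambda$. I would work under the classical regularity assumptions stated at the start of Section~\ref{sec:metrics_toolbox}: $\lambda$ is $C^1$ in space and time, and $\bm v^*$ is $C^1$ in space. I would also assume the strict positivity $\lambda\ge\lambda_{\min}>0$ from Theorem~\ref{Theorem1}. Under these hypotheses every term in \eqref{eqPro3} is a genuine pointwise function and the division is legitimate.

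\textbf{Step 1 (classical balance law).} By Proposition~\ref{Prop_Optimal_Transport}, $\bm v^*\in\mathcal V_t$, so the weak identity \eqref{eq:admissible_velocity_set} holds for all $w\in H^1(\Omega)$. Restricting to $w\in C_c^\infty(\Omega)$ and integrating by parts, which the $C^1$ regularity permits, gives
\[
\int_{\Omega}\bigl(\partial_t\lambda+\nabla\cdot(\lambda\bm v^*)\bigr)w\,\mathrm d\bm x=0
\]
for every such $w$. Since the integrand is continuous, this yields the pointwise equation $\partial_t\lambda+\nabla\cdot(\lambda\bm v^*)=0$, which is \eqref{eq:continuity_conservative} with $\bm v=\bm v^*$. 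It is the same relation as \eqref{eq:conservative_pde} with $\bm v^*=-\nabla\phi$. This step is the only point needing care, since the identity must be upgraded from its weak form to a pointwise one. For $S\neq0$ the same argument would produce an additional term $S/\lambda$, which explains the restriction to $S=0$.

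\textbf{Step 2 (product rule).} Expanding the divergence gives $\nabla\cdot(\lambda\bm v^*)=\bm v^*\cdot\nabla\lambda+\lambda\nabla\cdot\bm v^*$. Substituting into the balance law yields
\[
\partial_t\lambda+\bm v^*\cdot\nabla\lambda+\lambda\nabla\cdot\bm v^*=0 .
\]
The first two terms together are exactly $D\lambda/Dt$ with the material derivative $D/Dt=\partial_t+\bm v^*\cdot\nabla$ as defined in the statement.

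\textbf{Step 3 (divide by the density).} Dividing by $\lambda>0$ gives $\nabla\cdot\bm v^*=-\lambda^{-1}D\lambda/Dt$. By Definition~\ref{Definition5} the left side is $\mathcal D$, which proves \eqref{eqPro3}. One can equivalently write $\mathcal D=-D(\ln\lambda)/Dt$, which matches the "negative logarithmic rate" wording. I do not expect any step to be a real obstacle; the argument is a product-rule identity once Step 1 places the balance law in classical form.
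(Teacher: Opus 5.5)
Your proposal is correct and follows the paper's proof: expand $\nabla\cdot(\lambda\bm v^*)$ by the product rule in the conservative continuity equation, recognize $\partial_t\lambda+\bm v^*\cdot\nabla\lambda$ as $D\lambda/Dt$, and divide by $\lambda>0$. The only difference is your Step~1, which the paper omits by taking the classical balance law \eqref{eq:continuity_conservative} as given; your weak-to-pointwise upgrade via compactly supported test functions is a valid additional justification, and because those test functions eliminate the boundary term, it also covers $S=0$ with a nonzero boundary flux $g$.
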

	
	\begin{proof}
	 Expanding the divergence term in the conservative continuity equation \eqref{eq:continuity_conservative} using the scalar-vector product identity yields: 
			\begin{equation} 
				\frac{\partial \lambda}{\partial t} + \bm{v}^{*} \cdot \nabla \lambda + \lambda (\nabla \cdot \bm{v}^{*}) = 0.
			\end{equation}
			Applying the definition of the material derivative $\frac{D\lambda}{Dt} \triangleq \frac{\partial \lambda}{\partial t} + \bm{v}^{*} \cdot \nabla \lambda$ and the congestion divergence $\mathcal{D} \triangleq \nabla \cdot \bm{v}^{*}$, the equation simplifies to:
			\begin{equation} 
				\frac{D\lambda}{Dt} + \lambda \mathcal{D} = 0.
			\end{equation}
			Assuming $\lambda > 0$ and solving for $\mathcal{D}$ directly yields \eqref{eqPro3}.
	\end{proof}

For the general balance law \eqref{eq:continuity_general}, the same expansion gives
	\begin{equation}
		\mathcal D(\bm{x},t)
		=\frac{S(\bm{x},t)}{\lambda(\bm{x},t)}
		-\frac{1}{\lambda(\bm{x},t)}\frac{D\lambda(\bm{x},t)}{Dt}.
		\label{eq:material_derivative_source}
	\end{equation}
	Thus, $\mathcal D<0$ identifies local compression of the selected macroscopic transport field, whereas density growth may also arise from $S>0$. Forecasting or early-warning use additionally requires causal field estimation and an evolution or service model.
	
	\subsection{Global Centroid Dynamics}\label{sec:metric_centroid} 
	
	To track the global trajectory of the demand, we define the first moment of the intensity field.
	
	\begin{Definition}\label{Definition6} 
		The network centroid, $\bm{C}(t)$, is the instantaneous expectation of the user location vector:
		\begin{equation} 
			\bm{C}(t) \triangleq \mathbb{E}_{\bm{x} \sim p(\cdot,t)}[\bm{x}] = \frac{1}{N(t)} \int_{\mathbb{R}^d} \bm{x} \lambda(\bm{x},t) \,\mathrm{d}\bm{x}.
		\end{equation}
	\end{Definition}
	
	\begin{Proposition}\label{Prop_Centroid_Drift} 
		In a source-free conservative system ($S=0$) with sufficiently fast flux decay at infinity, the velocity of the centroid, $\bm{V}_C(t) \triangleq \frac{\mathrm{d}\bm{C}}{\mathrm{d}t}$, is exactly the intensity-weighted spatial average of the macroscopic velocity field:
		\begin{equation}\label{eqVoC} 
			\bm{V}_C(t) = \mathbb{E}_{\bm{x} \sim p(\cdot,t)}[\bm{v}^{*}(\bm{x},t)] = \int_{\mathbb{R}^d} \bm{v}^{*}(\bm{x},t) p(\bm{x},t) \,\mathrm{d}\bm{x}.
		\end{equation}
	\end{Proposition}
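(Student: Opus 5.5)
The plan is to differentiate the first moment directly, use the conservative continuity equation \eqref{eq:continuity_conservative} with $\bm v^*$ to replace $\partial_t\lambda$, and integrate by parts once.

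\textbf{Step 1: constant mass.} Integrating \eqref{eq:continuity_conservative} over $\mathbb R^d$ gives $\dot N(t)=-\int\nabla\cdot\bm J\,\mathrm d\bm x=0$. The divergence integral vanishes by the divergence theorem on balls $B_R$ together with the assumed flux decay, i.e. $\int_{\partial B_R}\|\bm J\|\,\mathrm dS\to0$. Hence $N$ is constant, and
\begin{equation*}
\frac{\mathrm d\bm C}{\mathrm dt}=\frac{1}{N}\int_{\mathbb R^d}\bm x\,\partial_t\lambda\,\mathrm d\bm x .
\end{equation*}
Moving the time derivative inside the integral requires a dominating bound. I would assume $\|\bm x\|\,|\partial_t\lambda|$ is integrable locally uniformly in $t$, which fits the regularity already imposed at the start of Section~\ref{sec:metrics_toolbox}.

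\textbf{Step 2: component-wise integration by parts.} For each coordinate $k$, substitute $\partial_t\lambda=-\nabla\cdot\bm J$. Using $\nabla\cdot(x_k\bm J)=x_k\nabla\cdot\bm J+J_k$,
\begin{equation*}
\int_{B_R}x_k\,\partial_t\lambda\,\mathrm d\bm x=-\int_{\partial B_R}x_k\,\bm J\cdot\bm n\,\mathrm dS+\int_{B_R}J_k\,\mathrm d\bm x .
\end{equation*}
Letting $R\to\infty$ gives $\dot C_k=N^{-1}\int J_k\,\mathrm d\bm x=\int v^*_k\,p\,\mathrm d\bm x$, since $\bm J=\lambda\bm v^*$ and $p=\lambda/N$. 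This is \eqref{eqVoC}.

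\textbf{Main obstacle: the boundary term.} Everything else is routine; the substantive point is making ``sufficiently fast flux decay'' precise. The term is bounded by $R\int_{\partial B_R}\|\bm J\|\,\mathrm dS$, so I would state the hypothesis as
\begin{equation*}
\liminf_{R\to\infty}R\int_{\partial B_R}\|\bm J\|\,\mathrm dS=0 .
\end{equation*}
Along a sequence $R_m\to\infty$ realizing this, both the Step 1 and Step 2 boundary terms vanish. I would also require $\bm J\in L^1(\mathbb R^d)$, so that the volume integrals converge independently of the sequence; for instance $\|\bm J\|=o(\|\bm x\|^{-d})$ suffices. On a bounded domain $\Omega$ with $g=0$, as in Theorem~\ref{Theorem1}, the same computation holds with no limit at all. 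There $\bm J\cdot\bm n=0$ on $\partial\Omega$, and in the weak form one simply tests with $w=x_k\in H^1(\Omega)$ in the admissible-set identity \eqref{eq:admissible_velocity_set}.
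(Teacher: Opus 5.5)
Your proposal is correct and follows the paper's proof essentially step for step. Both arguments use $\dot N=0$ to reduce $N\bm V_C$ to $\int\bm x\,\partial_t\lambda\,\mathrm d\bm x$, substitute $\partial_t\lambda=-\nabla\cdot(\lambda\bm v^*)$, integrate by parts componentwise with $\nabla x_k=\bm e_k$, and discard the boundary term by flux decay. Your condition $\liminf_{R\to\infty}R\int_{\partial B_R}\|\bm J\|\,\mathrm dS=0$, your derivation of $\dot N=0$, and the weak-form test with $w=x_k$ on a bounded $\Omega$ only make explicit what the paper leaves as ``sufficiently fast flux decay.''

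One small slip: $\|\bm J\|=o(\|\bm x\|^{-d})$ makes the boundary term vanish but does not give $\bm J\in L^1(\mathbb R^d)$ (take $\|\bm J\|\sim\|\bm x\|^{-d}/\log\|\bm x\|$), so use $\|\bm J\|=O(\|\bm x\|^{-d-\varepsilon})$ as your example instead.
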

	
\begin{proof}
	Differentiating $\bm C(t)=N(t)^{-1}\int_{\mathbb R^d}\bm x\lambda(\bm x,t)\,\mathrm d\bm x$ and using $\mathrm dN(t)/\mathrm dt=0$ gives
	\begin{equation} 
		N(t)\bm V_C(t)=\int_{\mathbb R^d}\bm x\frac{\partial\lambda(\bm x,t)}{\partial t}\,\mathrm d\bm x.
	\end{equation}
	 Under the source-free balance law, substituting $\partial_t\lambda=-\nabla\cdot(\lambda\bm v^*)$ gives
	\begin{equation} 
		N(t)\bm V_C(t)=-\int_{\mathbb R^d}\bm x\nabla\cdot(\lambda\bm v^*)\,\mathrm d\bm x.
	\end{equation}
	For the $i$-th component, integration by parts gives
	\begin{equation} 
		\int_{\mathbb R^d}x_i\nabla\cdot(\lambda\bm v^*)\,\mathrm d\bm x=-\int_{\mathbb R^d}\nabla x_i\cdot(\lambda\bm v^*)\,\mathrm d\bm x,
	\end{equation}
	where the boundary term vanishes when the flux decays sufficiently rapidly at infinity. Since $\nabla x_i=\bm e_i$, where $\bm e_i$ is the $i$th Cartesian basis vector,
	\begin{equation} 
		N(t)\bm V_C(t)=\int_{\mathbb R^d}\lambda(\bm x,t)\bm v^*(\bm x,t)\,\mathrm d\bm x.
	\end{equation}
	Dividing by $N(t)$ yields \eqref{eqVoC}.
\end{proof}
	

For the general balance law on a bounded observation domain, define
$N_{\Omega}(t)=\int_{\Omega}\lambda(\bm x,t)\,\mathrm d\bm x$
and
$\bm C_{\Omega}(t)=N_{\Omega}(t)^{-1}\int_{\Omega}\bm x\lambda(\bm x,t)\,\mathrm d\bm x$.
Differentiating $N_{\Omega}(t)\bm C_{\Omega}(t)$ and using the mass-compatibility identity \eqref{eq:mass_compatibility_general} gives
	\begin{align}
			N_{\Omega}(t)\dot{\bm C}_{\Omega}(t)
			=&{}\int_{\Omega}\lambda\bm v^*\,\mathrm d\bm x
			+\int_{\Omega}\bigl(\bm x-\bm C_{\Omega}(t)\bigr)S\,\mathrm d\bm x  \nonumber\\
			&-\int_{\partial\Omega}\bigl(\bm x-\bm C_{\Omega}(t)\bigr)g\,\mathrm dS.
		\label{eq:centroid_general_balance}
\end{align}
Thus, in the source-free closed case ($S=0$ and $g=0$), $\dot{\bm C}_{\Omega}$ reduces to the intensity-weighted mean velocity. More generally, source activity and boundary exchange modify the centroid dynamics through their first-moment contributions, although these contributions may vanish or cancel in special cases.

\section{Numerical and Empirical Validation}\label{sec:simulation}
The experiments in this section serve different evidentiary purposes, summarized in Table~\ref{tab:evidence_scope}. The matched tests are used only for convergence, implementation, identifiability, or analytical-consistency checks; controlled perturbations evaluate sensitivity; the pNEUMA experiment provides held-out real-trace validation of macroscopic transport; and the UAV study evaluates a downstream control use case rather than reconstruction fidelity.

\begin{table*}[!t]
	\centering
	\caption{Evidentiary role and scope of the experiments in Section~VI.}
	\label{tab:evidence_scope}
	\vspace*{-2.5mm}
	\scriptsize
	\setlength{\tabcolsep}{2.5pt}
	\begin{tabular}{p{0.05\textwidth}p{0.24\textwidth}p{0.32\textwidth}p{0.32\textwidth}}
		\hline
		Section & Evidence type & Purpose & Evidentiary limit\\
		\hline
		VI-A & Matched finite-population convergence & Check convergence of the selected density estimator under the prescribed generator and smoothing rule. & Does not establish reconstruction accuracy on operational mobility data.\\
		VI-B & Implementation and identifiability check & Verify minimum-energy selection and separation of the density-identifiable and weighted-solenoidal components. & The $0.092\%$ normalized root-mean-square error (NRMSE) is not an unseen-data accuracy result.\\
		VI-C & Sensitivity and robustness analysis & Isolate finite-count, temporal-estimation, missing-data, source, boundary, and operator perturbations. & Uses manufactured reference fields and does not measure empirical fidelity.\\
		VI-D & Real-trace empirical validation & Compare reconstructed internal transport with held-out trajectory crossings in pNEUMA. & Evaluates macroscopic transport, not microscopic vehicle velocities.\\
		VI-E & Matched analytical-consistency check & Check the derived scalar kinematic identities and illustrate their temporal interpretation. & Does not provide independent reconstruction-accuracy evidence.\\
		VI-F & Downstream control evaluation & Evaluate the centroid as a causal reference under observation delay and motion constraints. & Does not validate the inverse reconstruction itself.\\
		\hline
	\end{tabular}
\end{table*}

	\subsection{Finite-Population Convergence}\label{sec:sim_consistency} 
	
	Section~\ref{sec:model_user_population} introduces the weak-convergence limit of the normalized empirical measure $\mu_n$ as $n\to\infty$. We examine its finite-population behavior using the Monte Carlo population size $N_{\mathrm{pop}}$ and the relative error
\begin{equation} 
\mathcal E(N_{\mathrm{pop}})
\triangleq
\frac{\left\|\widehat{\lambda}_{N_{\mathrm{pop}}}(\bm x,t)-\lambda(\bm x,t)\right\|_{L^2(\Omega)}}{\left\|\lambda(\bm x,t)\right\|_{L^2(\Omega)}}.
\label{eq:finite_population_error}
\end{equation}
Here, $\lambda(\bm x,t)$ is the theoretical macroscopic intensity, and $\widehat{\lambda}_{N_{\mathrm{pop}}}(\bm x,t)$ is the empirical density field reconstructed from a Monte Carlo realization containing $N_{\mathrm{pop}}$ users.

\begin{figure}[!t]
\vspace*{-3mm}
\centering
\includegraphics[width=0.9\columnwidth]{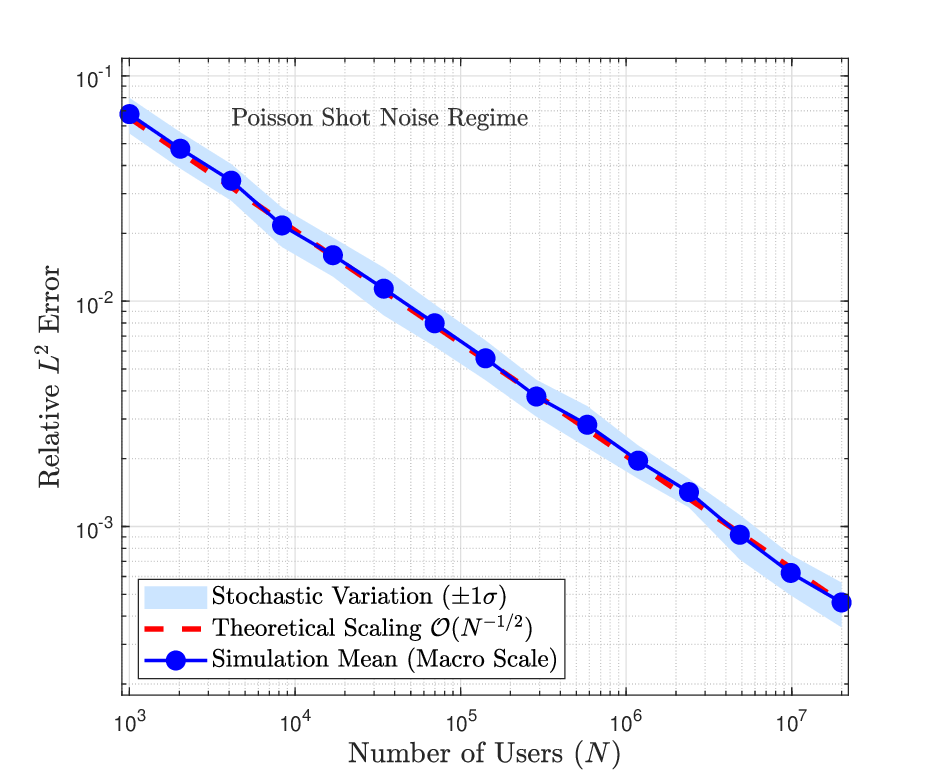}
\vspace*{-5mm}
\caption{Finite-population convergence under the matched generator. Blue markers show the simulated mean relative density-estimation error $\mathcal E(N_{\mathrm{pop}})$, the shaded band indicates $\pm1$ standard deviation across independent realizations, and the red dashed line shows the theoretical $\mathcal O(N_{\mathrm{pop}}^{-1/2})$ scaling.}
\label{fig:convergence} 
\vspace*{-4mm}
\end{figure}

We use a time-varying Gaussian hotspot to generate a non-separable intensity field. To separate sampling error from grid-integration error, the theoretical intensity $\lambda(\bm x,t)$ is evaluated by the cumulative-distribution-function difference method. For each realization $m\in\{1,\ldots,M\}$, a constellation $\Phi_m^{(N_{\mathrm{pop}})}$ is generated by drawing $N_{\mathrm{pop}}$ coordinates from the instantaneous density profile. The empirical field $\widehat{\lambda}_{N_{\mathrm{pop}}}$ is then estimated using the same smoothing rule for all population sizes. The results are averaged over $M=10^5$ independent realizations for each $N_{\mathrm{pop}}$.

Fig.~\ref{fig:convergence} shows that the relative error follows an approximately $N_{\mathrm{pop}}^{-1/2}$ trend over the tested range, while the snapshot variability decreases with population size. At the largest simulated population, $N_{\mathrm{pop}}\approx10^7$, the error is below $10^{-3}$ for the selected estimator and Poisson generator. These finite-range results are consistent with the expected asymptotic convergence under the matched generator and smoothing rule, but they do not provide empirical validation for operational mobility traces. For a measured network, the adequacy of the continuum approximation depends on the spatial aggregation scale, reporting interval, local effective observation count, and estimator bias and must be assessed on the target dataset.

	\begin{figure}[!b]
		\vspace*{-6mm}
		\centering
		\includegraphics[width=\columnwidth]{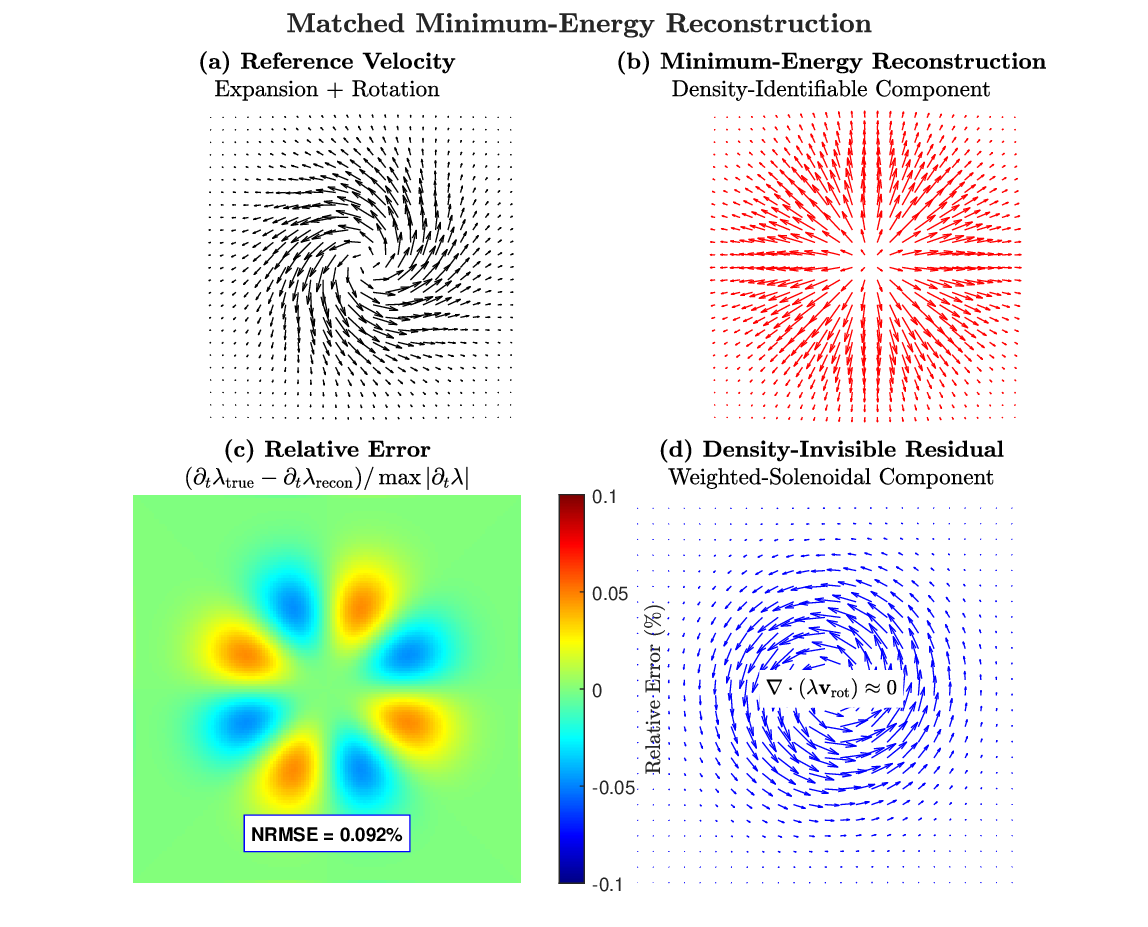}
		\vspace*{-10mm}
		\caption{Matched inverse-problem check. (a) Reference velocity $\bm v_{\mathrm{true}}$, formed by expansion and rotation. (b) Reconstructed minimum-energy field $\bm v^*$. (c) Relative reconstruction-error map. (d) Residual $\bm v_{\mathrm{true}}-\bm v^*$, corresponding to the injected density-invisible circulation.}
		\label{fig:inverse_validation} 
		\vspace*{-1mm}
	\end{figure}
	
	
	\subsection{Matched Minimum-Energy Reconstruction}\label{sec:sim_velocity}
	
	Complementing the stochastic convergence analysis in Section~\ref{sec:sim_consistency}, this subsection checks the inverse solver using deterministic manufactured continuum fields. Proposition~\ref{Prop_Optimal_Transport} shows that the F-STSG reconstruction selects the minimum-energy admissible transport component and excludes weighted-solenoidal modes that do not alter the observed density evolution.
	
	To verify this property and the resolution of the inverse problem's non-uniqueness, we construct a reference velocity field $\bm{v}_{\mathrm{true}}$ as a superposition of a compressive component and a weighted-solenoidal component:
	\begin{equation} 
		\bm{v}_{\mathrm{true}} = \bm{v}_{\mathrm{irr}} + \bm{v}_{\mathrm{rot}}.
	\end{equation}
	Here, $\bm v_{\mathrm{irr}}$ is a radial gradient field that drives the prescribed density evolution. The injected rotational component $\bm v_{\mathrm{rot}}$ is constructed to satisfy $\nabla\cdot(\lambda\bm v_{\mathrm{rot}})=0$. In the present construction, this weighted-solenoidal condition follows from $\nabla\cdot\bm v_{\mathrm{rot}}=0$ and $\bm v_{\mathrm{rot}}\cdot\nabla\lambda=0$. The rotational component contributes to the local flux and kinetic energy but does not change the density evolution.
	
The inverse solver receives only the temporal derivative $\partial_t\lambda$ generated by $\bm v_{\mathrm{true}}$. Fig.~\ref{fig:inverse_validation}(a) shows the reference field, which contains both the radial transport component and a rotational component, while Fig.~\ref{fig:inverse_validation}(b) shows that the reconstruction retains the density-identifiable radial component.  Fig.~\ref{fig:inverse_validation}(c) further shows agreement with this identifiable component, with an  NRMSE of approximately $0.092\%$, where the NRMSE is $\|\bm v^*-\bm v_{\mathrm{irr}}\|_{L^2}/\|\bm v_{\mathrm{irr}}\|_{L^2}$. Because the reference field and inverse solver use matched numerical settings, this small error is interpreted only as an implementation and identifiability check rather than reconstruction accuracy on unseen data. Finally, Fig.~\ref{fig:inverse_validation}(d) shows the remaining weighted-solenoidal component, which is invisible to the density evolution but may still affect microscopic mobility or user-level performance.

\begin{figure*}[!t]
	\vspace*{-1mm}
	\centering
	\includegraphics[width=0.9\textwidth]{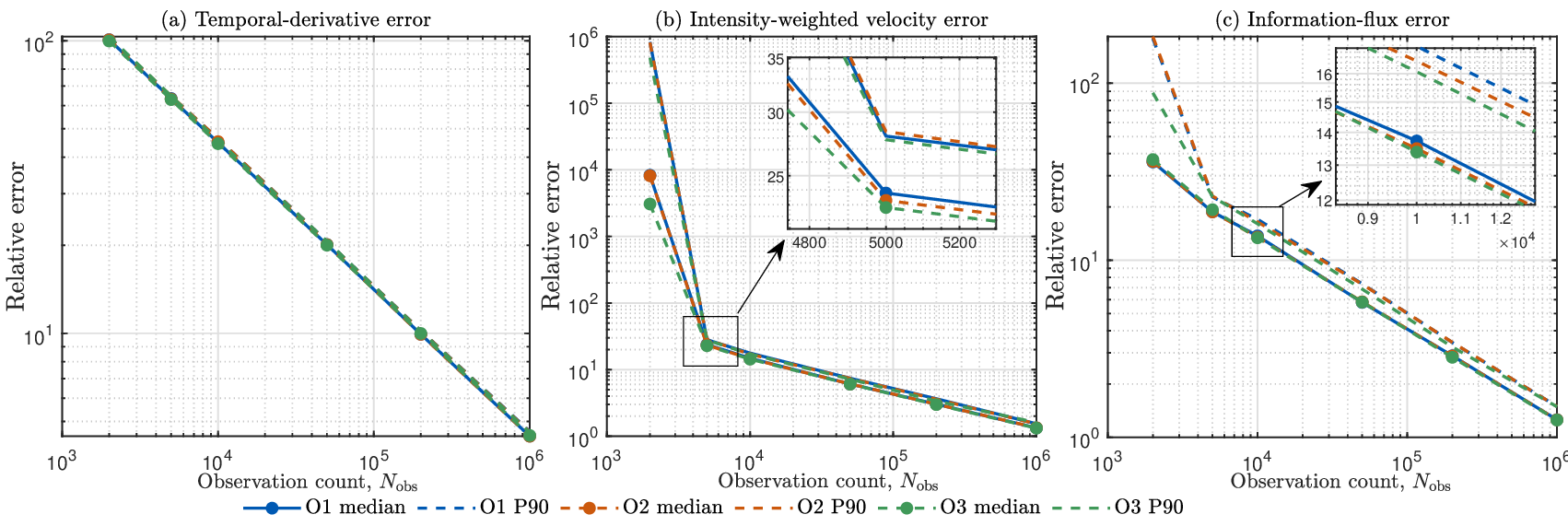}
	\vspace*{-5mm}
	\caption{Sensitivity to aggregate-count observations. $\mathrm{O1}$ uses independently sampled grid-bin Poisson counts, $\mathrm{O2}$ rescales each $\mathrm{O1}$ snapshot to the prescribed aggregate total, and $\mathrm{O3}$ uses fixed-total multinomial counts. Solid and dashed curves show the median and 90th percentile, respectively, over $30$ seeds.}
	\label{fig:observation_sensitivity}
	\vspace*{-4mm}
\end{figure*}
	
\subsection{Sensitivity to Observation, Estimation, and Model Perturbations}\label{sec:sim_sensitivity}

We next examine the robustness of the reconstruction pipeline under controlled perturbations using manufactured fields on a $96\times96$ grid and $30$ random seeds. These controlled tests complement the real-trace evaluation in Section~\ref{sec:sim_real_trace} by providing known reference fields and allowing individual error sources to be isolated. The references provide $\lambda$, $\partial_t\lambda$, $\bm v_{\mathrm{ref}}$, and $\bm J_{\mathrm{ref}}=\lambda\bm v_{\mathrm{ref}}$, while the reconstruction pipeline produces $\widehat{\lambda}$, $\widehat{\partial_t\lambda}$, $\widehat{\bm v}^{*}$, and $\widehat{\bm J}=\widehat{\lambda}\widehat{\bm v}^{*}$.

Three aggregate-count observation models are considered. In $\mathrm{O1}$, each grid-bin count is sampled independently from a Poisson distribution, so the aggregate count varies across snapshots. In $\mathrm{O2}$, each independently sampled Poisson snapshot is rescaled to the prescribed aggregate total. In $\mathrm{O3}$, the grid-bin counts are jointly sampled from a multinomial distribution with a fixed total count. Model $\mathrm{O1}$ is retained as an independent-snapshot stress test. When an $\mathrm{O1}$ realization violates the closed-domain compatibility condition, the spatial mean of the estimated forcing is removed before the Neumann solve.

The reconstruction errors are measured relative to the corresponding manufactured reference fields as
\begingroup
\begin{subequations}\label{eq:sensitivity_metrics}
	
	\begin{equation}
		e_{\partial_t\lambda}
		=
		\frac{
			\bigl\|
			\widehat{\partial_t\lambda}
			-
			\partial_t\lambda
			\bigr\|_{L^2(\Omega)}
		}{
			\bigl\|
			\partial_t\lambda
			\bigr\|_{L^2(\Omega)}
		}.
		\label{eq:sensitivity_dt}
	\end{equation}
	
	\begin{equation}
		e_{v,\lambda}
		=
		\frac{
			\bigl\|
			\widehat{\bm v}^{*}-\bm v_{\mathrm{ref}}
			\bigr\|_{L^2_{\lambda}(\Omega)}
		}{
			\bigl\|
			\bm v_{\mathrm{ref}}
			\bigr\|_{L^2_{\lambda}(\Omega)}
		}.
		\label{eq:sensitivity_velocity}
	\end{equation}
	
	\begin{equation}
		e_J
		=
		\frac{
			\bigl\|
			\widehat{\bm J}
			-
			\bm J_{\mathrm{ref}}
			\bigr\|_{L^2(\Omega)}
		}{
			\bigl\|
			\bm J_{\mathrm{ref}}
			\bigr\|_{L^2(\Omega)}
		}.
		\label{eq:sensitivity_flux}
	\end{equation}
	
\end{subequations}
\endgroup
Here, $e_{\partial_t\lambda}$ measures temporal-derivative estimation error. The velocity metric $e_{v,\lambda}$ uses the intensity-weighted norm of the minimum-energy formulation and gives greater weight to errors in high-intensity regions. The flux metric $e_J$ includes errors in both $\widehat\lambda$ and $\widehat{\bm v}^{*}$. All cases included in the reported relative-error summaries have nonzero reference norms. Medians and 90th percentiles are reported because the finite-count error distributions are skewed.

With exact fields, exact temporal derivatives, and matched discrete operators, the reconstruction errors approach numerical precision, which serves only as an algebraic implementation check. Once finite-count observations are introduced, Fig.~\ref{fig:observation_sensitivity} shows that increasing the aggregate count reduces the reconstruction error, while temporal differentiation remains the dominant error source in the tested observation settings.

The isolated source, boundary, and operator perturbations produce smaller errors than the finite-count observation cases over the tested ranges. In contrast, combining low observation counts, missing grid bins, source and boundary perturbations, and discretization mismatch can produce velocity and flux errors larger than their corresponding reference norms.

These results identify a tradeoff between spatiotemporal resolution and observation noise. Finer updates do not necessarily improve reconstruction when the available count per update becomes too small, a behavior that is also observed in the real-trace experiment below.

\subsection{Real-Trace Validation With pNEUMA Mobility Data}\label{sec:sim_real_trace}
We next test F-STSG on the public pNEUMA urban-mobility dataset~\cite{ref_pNEUMA}. Three sessions recorded on October~29, October~30, and November~1, 2018 are used, comprising 23,494 vehicle trajectories and more than $7.3\times10^{7}$ recorded positions. The trajectories are binned into spatial counts to estimate $\widehat{\lambda}(\bm x,t)$. Thus, the reconstruction uses only the evolution of the aggregate density together with measured entries and exits at the outer boundary; individual vehicle velocities and internal crossing records are not used. Unless otherwise stated, a $20\times20$ grid and a $5$-s update interval are used. A spatially uniform compatibility correction is applied to the estimated forcing before the Neumann solve so that the discrete mass-balance condition is satisfied.
\begin{figure}[!t]
	\centering
	\includegraphics[width=0.5\textwidth]{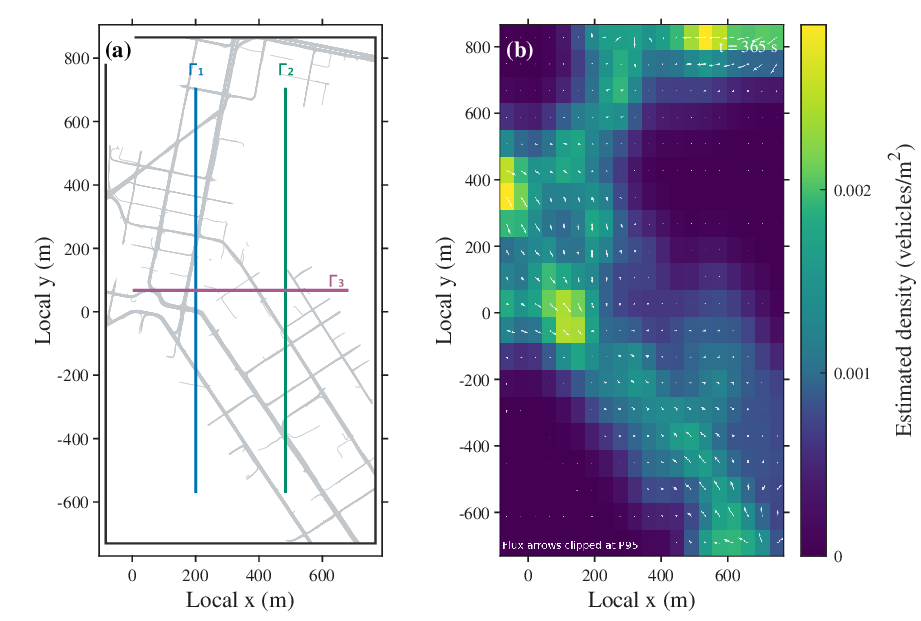}
	\vspace*{-8mm}
	\caption{Real-trace reconstruction using the pNEUMA mobility dataset. (a) Observation domain and three fixed internal interfaces. (b) Aggregate vehicle density and reconstructed information flux at a representative time.}
	\label{fig:pneuma_reconstruction}
	\vspace*{-3mm}
\end{figure}
Fig.~\ref{fig:pneuma_reconstruction}(a) shows the observation region and the three predefined internal interfaces $\Gamma_1$, $\Gamma_2$, and $\Gamma_3$. Panel~(b) gives a representative density snapshot together with the reconstructed information flux. The interfaces are fixed before any performance metric is calculated.
To evaluate the reconstructed flux, the original high-rate trajectories are used only to count vehicles crossing these interfaces. For an interface $\Gamma$, the signed crossing rate is
\begin{equation}
J_{\Gamma}^{\mathrm{trace}}(t)=\frac{N_{\Gamma}^{+}(t)-N_{\Gamma}^{-}(t)}{\Delta t_{\mathrm{upd}}}.
\end{equation}
where $N_{\Gamma}^{+}(t)$ and $N_{\Gamma}^{-}(t)$ are the numbers of crossings in the two opposite directions. The corresponding F-STSG crossing rate is obtained by integrating the reconstructed information flux across $\Gamma$ according to Proposition~\ref{Prop_Flux_Integral}.
\begin{figure}[!t]
	\centering
	\includegraphics[width=0.5\textwidth]{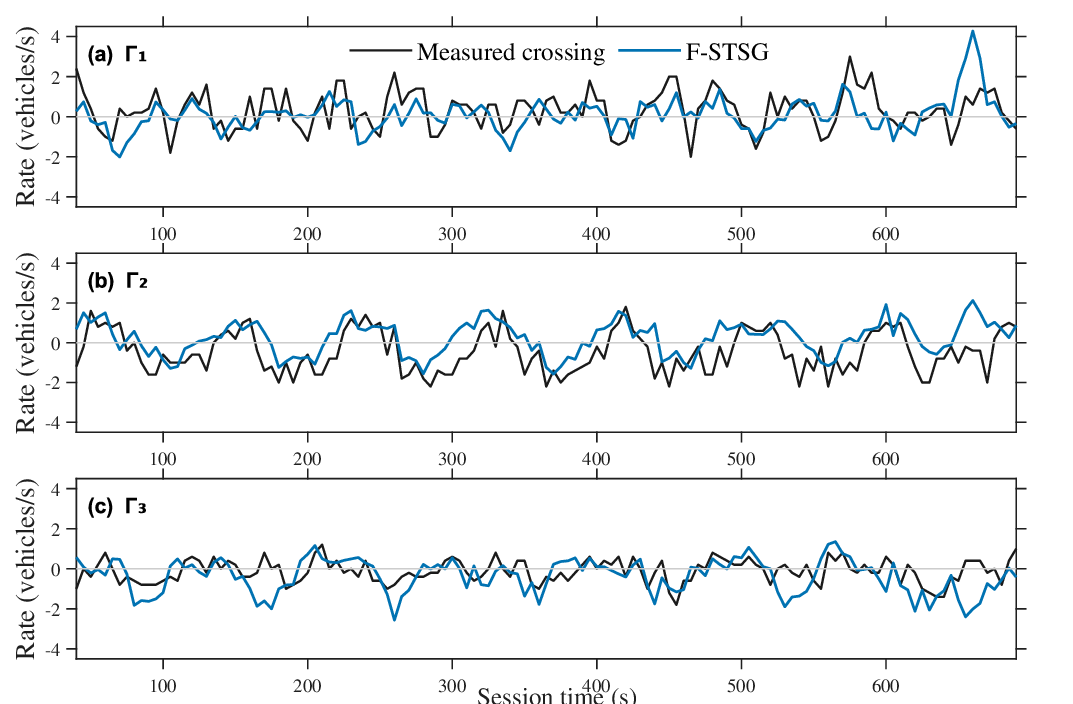}
	\vspace*{-8mm}
	\caption{Real-trace validation of reconstructed internal crossing rates on the pNEUMA dataset. Panels (a)--(c) compare the trajectory-counted and reconstructed signed crossing rates at $\Gamma_1$, $\Gamma_2$, and $\Gamma_3$, respectively.}
	\label{fig:pneuma_crossing}
	\vspace*{-3mm}
\end{figure}
Fig.~\ref{fig:pneuma_crossing} shows that the reconstructed rates capture part of the temporal variation observed in the trajectory counts, while noticeable errors remain at some time instants. As a simple reference, we also consider a centroid-translation baseline in which the whole density field moves with its centroid velocity. This baseline retains only the global drift and ignores spatially varying transport within the observation region.

\begin{table}[!t]
	\centering
	\small
	\caption{Average real-trace results over three pNEUMA sessions and three internal interfaces. MAE and RMSE are in vehicles/s.}
	\label{tab:pneuma_validation}
	\vspace*{-3mm}
	\begin{threeparttable}
		\setlength{\tabcolsep}{4pt}
		\begin{tabular}{c|cccc}
			\hline
			Method & Pearson & Spearman & MAE & RMSE\\
			\hline
			F-STSG reconstruction & 0.511 & 0.515 & 0.766 & 0.965\\
			Centroid baseline & -0.028 & -0.040 & 0.877 & 1.076\\
			\hline
		\end{tabular}
		\begin{tablenotes}
			\footnotesize
			\item F-STSG: fluid-spatiotemporal stochastic geometry; MAE: mean absolute error; RMSE: root-mean-square error.
		\end{tablenotes}
	\end{threeparttable}
	\vspace*{-5mm}
\end{table}

Table~\ref{tab:pneuma_validation} reports the averages over the three sessions and three interfaces. The F-STSG reconstruction gives Pearson and Spearman correlations of $0.511$ and $0.515$, whereas the centroid baseline gives $-0.028$ and $-0.040$. The MAE decreases from $0.877$ to $0.766$ vehicles/s, and the RMSE decreases from $1.076$ to $0.965$ vehicles/s. Averaged over the three sessions and internal interfaces, F-STSG achieves higher temporal correlation and lower MAE and RMSE than the centroid-translation baseline. Additional tests with different grid resolutions and update intervals show that very fine temporal sampling can become less reliable when too few vehicles are observed in each update. For the default $20\times20$ grid, the inverse-reconstruction stage takes $6.733$~ms per update on average, with a 95th-percentile latency of $8.715$~ms on an Intel(R) Core(TM) Ultra 7 155U CPU. These timings exclude telemetry transfer, raw-data preprocessing, field estimation, and downstream control. The experiment therefore evaluates the recovery of macroscopic transport from aggregate density evolution; it does not assess microscopic vehicle velocities or deployment-level system performance.
\subsection{Validation and Temporal Interpretation of Kinematic Metrics}\label{sec:sim_mobility_metrics}

We use a matched periodic example to verify the scalar kinematic identities and illustrate their temporal interpretation. The velocity divergence computed from $\nabla\cdot\bm v^*$ agrees with the material-derivative relation in \eqref{eqPro3}, while the centroid velocity obtained from the generated trajectory agrees with the intensity-weighted mean velocity in \eqref{eqVoC}.

\begin{figure}[!t]
	\centering
	\includegraphics[width=0.9\columnwidth]{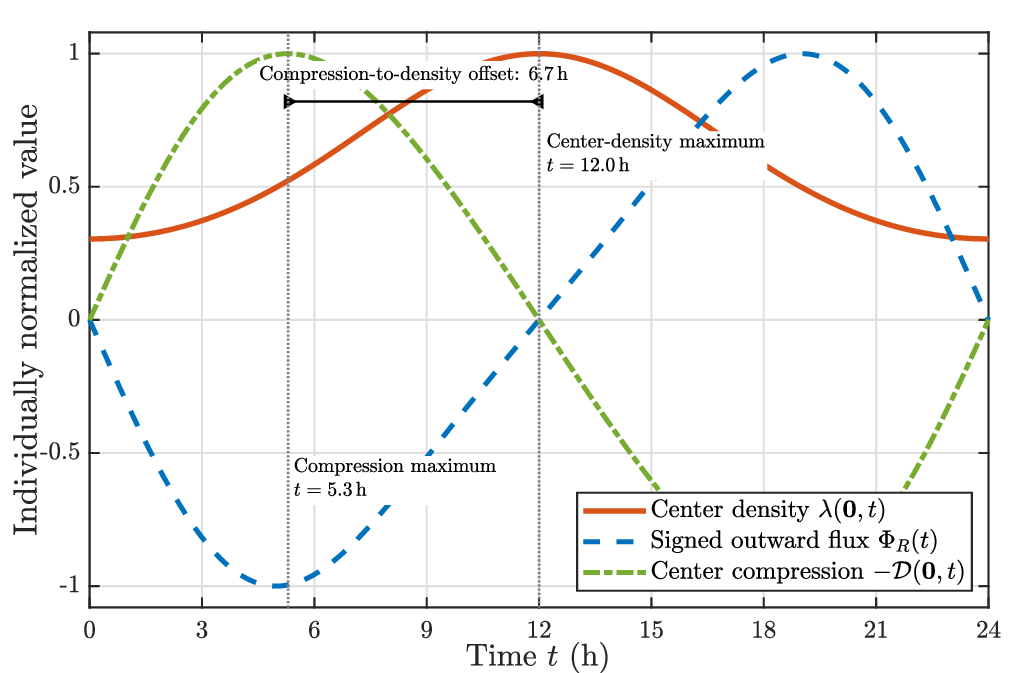}
	\vspace*{-4mm}
\caption{Normalized center density, signed outward boundary flux, and center compression indicator in the selected periodic example. Peak compression and peak inward transport precede the center-density maximum, and both indicators cross zero at the density peak.}
	\label{fig:handover}
	\vspace*{-2mm}
\end{figure}

Figure~\ref{fig:handover} illustrates the temporal relation among the kinematic quantities for the selected periodic radial example. Let $J_{\partial B_{r_0}}(t)=\int_{\partial B_{r_0}}\bm J\cdot\bm n\,\mathrm dS$ denote the signed outward flux across the boundary of a disk of radius $r_0$, so that $J_{\partial B_{r_0}}<0$ represents net inward transport and $J_{\partial B_{r_0}}>0$ represents outward transport. Radial symmetry gives $\bm v^*(\bm 0,t)=\bm 0$, and \eqref{eqPro3} therefore relates the center compression directly to the local density growth rate. Peak compression and peak inward transport precede the center-density maximum at $t=12$~h, and both indicators cross zero at the density peak. The exact lead times are specific to this periodic example.
	
	
	\begin{figure*}[b]
		\vspace*{-4mm}
		\centering
		\includegraphics[width=0.9\textwidth]{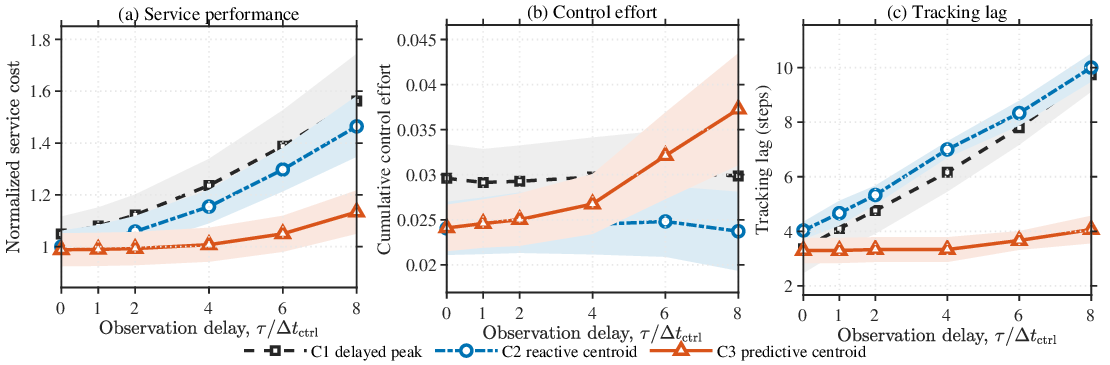}
		\vspace*{-5mm}
		\caption{Sensitivity to observation delay: (a) normalized cumulative service distance, (b) cumulative control effort, and (c) tracking lag. Markers indicate the mean, and shaded bands represent pointwise normal-approximation $95\%$ confidence intervals across $30$ scenario-and-seed-matched runs at each observation delay. Panel~(a) shows normalized group means, whereas the gains reported in the text are paired run-level averages. C4 is omitted because it uses future information and serves only as a clairvoyant reference.}
		\label{fig:control_delay_sensitivity}
	\end{figure*}
\subsection{Constrained Centroid-Reference UAV Control}\label{sec:applications}

\subsubsection{Centroid Reference and Controller}\label{sec:app_centroid}
Unlike our prior UAV-logistics application, which used information flux for workload prediction and proactive infrastructure provisioning \cite{DongDigitalTides2026}, the demonstration here uses the intensity-weighted centroid as a causal control reference.

Let
	$p(\bm x,t)=\lambda(\bm x,t)/N(t)$
	be the normalized demand density, and let $\bm C(t)$ denote its intensity-weighted centroid. Assuming that $p(\bm x,t)$ has a finite second moment, the instantaneous mean squared service distance from a mobile resource at $\bm P$ satisfies
	\begin{equation}
		\begin{aligned}
			Q(\bm P,t)
			&\triangleq
			\int_{\mathbb R^d}
			\|\bm x-\bm P\|^2
			p(\bm x,t)\,\mathrm d\bm x\\
			&=
			Q(\bm C(t),t)
			+
			\|\bm P-\bm C(t)\|^2.
		\end{aligned}
		\label{eq:uav_centroid_decomposition}
\end{equation}
The second equality follows from
	$\int_{\mathbb R^d}(\bm x-\bm C(t))p(\bm x,t)\,\mathrm d\bm x=\bm 0$.
	Hence, $\bm C(t)$ uniquely minimizes this instantaneous unconstrained squared-distance objective and is used below as a control reference.

At control update $k$, let $\tau\geq0$ denote the observation delay and assume that it is an integer multiple of the control interval $\Delta t_{\mathrm{ctrl}}$. Define	$d_{\tau}\triangleq\tau/\Delta t_{\mathrm{ctrl}}\in\mathbb N_0$
	and $t_k=k\Delta t_{\mathrm{ctrl}}$. The quantity $\widehat{\bm C}_{k+h\mid k-d_{\tau}}$ denotes the centroid predicted for update $k+h$ using only observations available through update $k-d_{\tau}$. Over a horizon of $H$ steps, let $\bm P_h$ and $\bm u_h$ denote the predicted UAV position and velocity command, respectively. Let $\mathcal F_k\subset\mathbb R^d$ be the admissible position set used at update $k$, including the flight-region bounds and the local obstacle exclusion. The predicted positions and velocity commands are obtained from
	\begin{align}
		\!\min_{\{\bm P_h\}_{h=0}^{H},
			\{\bm u_h\}_{h=0}^{H-1}}\quad \!\!\!\!&
		\sum_{h=1}^{H}
		\left\|
		\bm P_h-
		\widehat{\bm C}_{k+h\mid k-d_{\tau}}
		\right\|^2
		+
		\eta
		\sum_{h=0}^{H-1}
		\|\bm u_h\|^2
		\label{eq:uav_constrained_controller}\\
		\mathrm{s.t.}\quad&
		\bm P_0=\bm P(t_k),\nonumber\\
		&
		\bm P_{h+1}
		=
		\bm P_h+
		\Delta t_{\mathrm{ctrl}}\bm u_h,
		\, h=0,\ldots,H \!-\!1,\nonumber\\
		&
		\|\bm u_h\|
		\leq v_{\max},
		\quad h=0,\ldots,H-1,\nonumber\\
		&
		\|\bm u_h \!- \! \bm u_{h-1} \!\|
		\!\leq\! a_{\max}\Delta t_{\mathrm{ctrl}},
		 h\! =\!0,\ldots,H \!-\!1 
          ,\nonumber\\
		&
		\bm P_h\in\mathcal F_k,
		\quad h=1,\ldots,H.
		\nonumber
\end{align}
Here, $\bm u_{-1}$ is the velocity applied immediately before update $k$, $\eta\geq0$ balances the tracking and control-effort terms, and $v_{\max}$ and $a_{\max}$ are the speed and acceleration limits. Only the first command $\bm u_0$ is applied, after which the problem is solved again at the next update. The numerical quadratic program uses conservative polygonal approximations of the speed and acceleration constraints and a local linear approximation of the obstacle boundary. The optimization in \eqref{eq:uav_constrained_controller} is a planning-level kinematic reference generator: $\bm u_h$ is a commanded velocity rather than a rotor-level control input. The constraints on $v_{\max}$ and $a_{\max}$ limit the planned motion but do not model command-response dynamics, execution-layer speed or acceleration saturation, or aerodynamic disturbances. These effects are examined separately below.

\subsubsection{Evaluation Setup and Results}

The closed-loop evaluation uses three demand-evolution scenarios representing hotspot translation, turning, and abrupt migration. At each update, a fixed total of
	$N_{\mathrm{obs}}=50000$
	counts is sampled from the normalized demand field on a $25\times25$ spatial grid. The empirical centroid is the count-weighted mean of the grid coordinates, and the empirical density peak is the center of the grid cell with the largest count.

We compare delayed density-peak control (C1), delayed reactive-centroid control (C2), causal predictive-centroid control (C3), and a clairvoyant constrained reference (C4). C1 repeats the latest available density peak over the prediction horizon, C2 repeats the latest available centroid, and C3 extrapolates the centroid by a least-squares linear fit using up to the latest five available estimates. C4 uses the future true centroid and is included only as a noncausal reference.

All controllers use the same initial condition, motion constraints, objective weight, and solver. The main setting uses $H=5$, $d_{\tau}=4$, $v_{\max}=0.0500$, $a_{\max}=0.0180$, and $\eta=0.0350$ in normalized simulation units.

Let $N_{\mathrm{ctrl}}$ denote the number of control updates in one run. For run $r$ and controller $c$, let $p_r(\bm x,t)$ denote the true demand density used for evaluation, and let $\bm P_{r,k}^{c}$ and $\bm u_{r,k}^{c}$ denote the realized UAV position and applied velocity command at update $k$, respectively. For controller	$c\in\{\mathrm{C1},\mathrm{C2},\mathrm{C3}\}$	and run $r$, define the cumulative service-distance objective and control effort as
	\begin{equation}
		\begin{aligned}
			\mathcal L_{\mathrm{svc},r}^{c}
			&=
			\Delta t_{\mathrm{ctrl}}
			\sum_{k=1}^{N_{\mathrm{ctrl}}}
			\int_{\mathbb R^d}
			\left\|
			\bm x-\bm P_{r,k}^{c}
			\right\|^2
			p_r(\bm x,t_k)\,
			\mathrm d\bm x,\\
			\mathcal L_{\mathrm{ctrl},r}^{c}
			&=
			\Delta t_{\mathrm{ctrl}}
			\sum_{k=0}^{N_{\mathrm{ctrl}}-1}
			\left\|
			\bm u_{r,k}^{c}
			\right\|^2.
		\end{aligned}
		\label{eq:uav_evaluation_metrics}
\end{equation}
The first quantity is evaluated by the corresponding spatial-grid sum in the simulation. The true demand field $p_r$ is used only to evaluate all controllers against the same noise-free service objective and is not supplied to C1--C3 during control. The second quantity measures the accumulated squared velocity commands.

For Fig.~\ref{fig:control_delay_sensitivity}, all service-distance values are normalized by the mean C2 value at zero delay. The percentages reported in the text are instead obtained from the paired run-level gain. For $N_{\mathrm{pair}}$ scenario-and-seed-matched runs, the gain of C3 relative to C2 is
	\begin{equation}
	\overline{\mathcal G}_{3,2}
		=
		\frac{1}{N_{\mathrm{pair}}}
		\sum_{r=1}^{N_{\mathrm{pair}}}
		\frac{
			\mathcal L_{\mathrm{svc},r}^{\mathrm{C2}}
			-
			\mathcal L_{\mathrm{svc},r}^{\mathrm{C3}}
		}{
			\mathcal L_{\mathrm{svc},r}^{\mathrm{C2}}
		}.
		\label{eq:paired_service_gain}
\end{equation}
The $95\%$ confidence intervals use the normal approximation across the corresponding run-level quantities. For \eqref{eq:paired_service_gain}, the interval is computed from the paired gains rather than from the difference between group means.

For controller $c$ in run $r$, the tracking lag is the nonnegative temporal shift $\ell\in\{0,\ldots,L_{\max}\}$ that minimizes the RMSE between the realized UAV positions and the delayed true-centroid trajectory, where $L_{\max}=\min(12,N_{\mathrm{ctrl}}-3)$. Ties are resolved using the smallest shift.

The main comparison uses $30$ observation-count seeds in each of the three scenarios, giving $90$ paired runs. C3 achieves a $12.80\%$ mean paired service-distance gain over C2 ($95\%$ confidence interval: $11.52\%$--$14.08\%$) and a $16.47\%$ gain over C1. The C3-versus-C2 win/tie/loss counts are $90/0/0$. The mean tracking lag is $3.33$ steps for C3 and $7.00$ steps for C2. This improvement is accompanied by a $9.59\%$ increase in pooled control effort. Under nominal direct execution, no solver failure or speed, acceleration, obstacle, or flight-region violation is observed. Execution nonidealities are evaluated separately below.

The delay sweep uses $10$ seeds in each scenario, giving $30$ paired runs at every delay. As shown in Fig.~\ref{fig:control_delay_sensitivity}, the mean paired C3-over-C2 gain increases from $1.29\%$ at zero delay to $21.94\%$ at an eight-step delay, together with higher control effort. In the abrupt-migration scenario, the gain decreases to $5.81\%$, indicating that linear centroid extrapolation is less effective when the demand trajectory changes abruptly. The study covers aggregate-count centroid estimation, causal centroid prediction, and constrained UAV tracking, but not the complete PDE-reconstruction-to-control chain.

\subsubsection{Execution-Layer Stress Test}
We further examine the planning-to-execution gap by passing the commanded velocity $\bm u_k$ through a surrogate UAV execution layer. Let $\bm w_k$ denote the executed velocity. The command-response, acceleration, and position updates are modeled as
\begin{equation}
	\begin{aligned}
			\bm a_k^{\mathrm{cmd}}&=\frac{\bm u_k-\bm w_k}{\tau_a},\\
			\bm a_k^{\mathrm{exe}}&=\operatorname{sat}_{a_{\mathrm{phys}}}\!\left(\bm a_k^{\mathrm{cmd}}+\bm d_k\right),\\
			\bm w_{k+1}&=\operatorname{sat}_{v_{\mathrm{phys}}}\!\left(\bm w_k+\Delta t_{\mathrm{ctrl}}\bm a_k^{\mathrm{exe}}\right),\\
			\bm P_{k+1}&=\bm P_k+\Delta t_{\mathrm{ctrl}}\bm w_{k+1},
	\end{aligned}
	\label{eq:uav_execution_model}
\end{equation}
where $\tau_a$ represents the finite command-response time, $v_{\mathrm{phys}}$ and $a_{\mathrm{phys}}$ impose effective speed and acceleration saturation, and $\bm d_k$ is a bounded temporally correlated wind-acceleration disturbance. For a vector $\bm z$, let $\operatorname{sat}_{c}(\bm z)\triangleq\bm z\min\{1,c/\|\bm z\|\}$ for $\bm z\neq\bm 0$, with $\operatorname{sat}_{c}(\bm 0)=\bm 0$, so that the Euclidean norm is limited to $c$. This surrogate model is intended to test execution mismatch rather than reproduce full six-degree-of-freedom quadrotor dynamics.

The same three demand scenarios, 30 seeds, controller parameters, and initial conditions are retained. Five execution conditions are considered: nominal direct execution; a response lag with $\tau_a=2\Delta t_{\mathrm{ctrl}}=2$~s; tightened physical limits $v_{\mathrm{phys}}=0.8v_{\max}$ and $a_{\mathrm{phys}}=0.8a_{\max}$; bounded correlated wind with root-mean-square magnitude $0.1a_{\max}$, maximum magnitude $0.2a_{\max}$, and correlation time $3\Delta t_{\mathrm{ctrl}}$; and a combined condition containing all three effects. Paired controllers use the same wind realization. A tracking-failure update is recorded when the centroid-tracking distance exceeds the threshold $0.16$.

\begin{table}[!t]
	\centering
	\caption{Execution-layer stress test for C3. Gain is the paired service-distance gain over C2. Speed saturation, acceleration saturation, and tracking failure are reported as percentages of control updates; obstacle violations are reported as update counts.}
	\label{tab:uav_execution_stress}
	\vspace*{-2mm}
	\scriptsize
	\setlength{\tabcolsep}{2.2pt}
	\begin{tabular}{lcccccc}
		\hline
		Case & RMSE & Gain & Speed sat. & Accel. sat. & Track fail. & Obstacle\\
		\hline
		Nominal & 0.0834 & 12.80\% & 0 & 0 & 6.11\% & 0\\
		Lag & 0.0936 & 12.96\% & 0 & 0 & 7.19\% & 0\\
		Saturation & 0.0897 & 12.46\% & 8.48\% & 20.61\% & 6.67\% & 0\\
		Wind & 0.0842 & 12.81\% & 2.83\% & 7.37\% & 6.11\% & 0\\
		Combined & 0.0958 & 12.68\% & 8.52\% & 5.07\% & 7.74\% & 24\\
		\hline
	\end{tabular}
	\vspace*{-2mm}
\end{table}

Table~\ref{tab:uav_execution_stress} shows that execution nonidealities produce measurable tracking degradation. A response time of $\tau_a=2\Delta t_{\mathrm{ctrl}}$ increases the C3 tracking RMSE by $12.18\%$ relative to nominal execution, while no saturation or constraint violation occurs in the lag-only case. Under tightened physical limits, the speed and acceleration bounds are activated in $8.48\%$ and $20.61\%$ of the C3 updates, respectively. Wind alone increases the C3 tracking RMSE by $0.89\%$ under the tested disturbance level. Under the combined condition, the paired C3-over-C2 service gain remains $12.68\%$, compared with $12.80\%$ nominally, but 24 obstacle-violation updates occur in 13 of the 90 C3 runs. In the abrupt-migration scenario, the tracking-failure update rate is $14.56\%$, and the C3-over-C2 gain is $4.44\%$. No flight-region violation or C3 solver failure is observed. These results show that execution mismatch can produce effective speed or acceleration saturation, tracking degradation, and local feasibility violations under the tested surrogate model; they do not constitute six-degree-of-freedom flight-dynamics or hardware validation.

\section{Conclusion}\label{sec:conclusion}

This paper established the kinematic foundation of F-STSG for non-stationary wireless networks. Based on population conservation, macroscopic transport reconstruction was formulated as a weighted inverse boundary value problem. Under the stated positivity and compatibility conditions, the minimum-energy principle selects a unique admissible velocity representative, from which information flux, divergence, and intensity-weighted centroid are derived. The formulation also establishes the identifiability boundary of density-invisible circulation.

Numerical experiments examined reconstruction consistency and sensitivity, while real pNEUMA mobility traces provided empirical validation of internal transport against held-out crossing measurements. A separate UAV study evaluated centroid-based control under observation delays and execution perturbations, revealing both performance gains and practical limitations. Future work will extend the kinematic framework to wireless interference and performance analysis, investigate causal prediction of evolving demand fields, and evaluate end-to-end implementations in operational cellular and UAV systems.


	\small

\makeatletter
\def\@IEEEBIOskipN{0.4\baselineskip}
\makeatother

\begin{IEEEbiography}
	[{\includegraphics[width=1in,height=1.25in,clip,keepaspectratio]{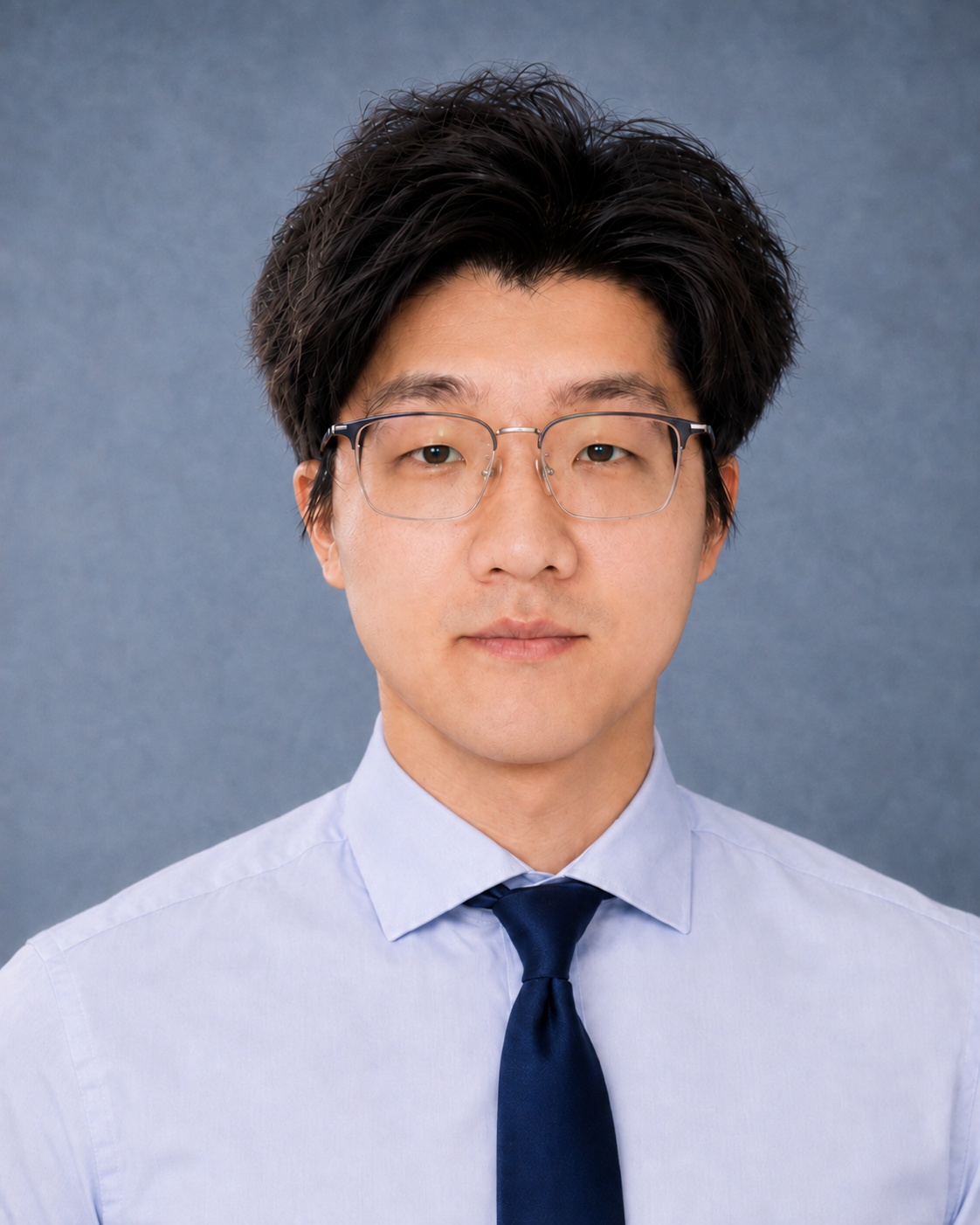}}]
	{Wen-Yu Dong}
	received the B.S. degree in electronic and information engineering from Sichuan University, Chengdu, China, in 2019, and the Ph.D. degree in information and communication engineering from the School of Information and Communication Engineering, Beijing University of Posts and Telecommunications, Beijing, China, in 2025. He is currently a Researcher with the Future Technology Research Center, China Telecom Research Institute, Beijing, China. His current research interests include space-air-ground integrated networks, fluid-spatiotemporal stochastic geometry (F-STSG), and wireless communications.
\end{IEEEbiography}

\begin{IEEEbiography}
	[{\includegraphics[width=1in,height=1.25in,clip,keepaspectratio]{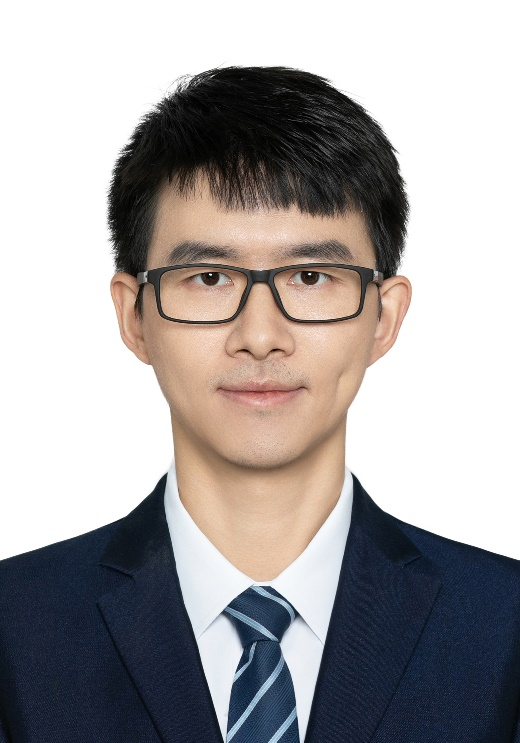}}]
	{Weiwei Jiang} (Senior Member, IEEE)
	received the B.Sc. Degree of Electronic Engineering and Ph.D. Degree of Information and Communication Engineering from the Department of Electronic Engineering, Tsinghua University, Beijing, China, in 2013 and 2018, respectively. He is currently an associate professor with the School of Information and Communication Engineering, Beijing University of Posts and Telecommunications, and Key Laboratory of Universal Wireless Communications, Ministry of Education. His current research interests include artificial intelligence, machine learning, big data, wireless communication and edge computing. He has published more than 100 academic papers in IEEE Trans and other journals, with more than 5700 citations in Google Scholar. He was included in Stanford's World's Top 2\% Scientists list in 2022--2026.
\end{IEEEbiography}

\begin{IEEEbiography}[{\includegraphics[width=1in,height=1.25in,clip,keepaspectratio]{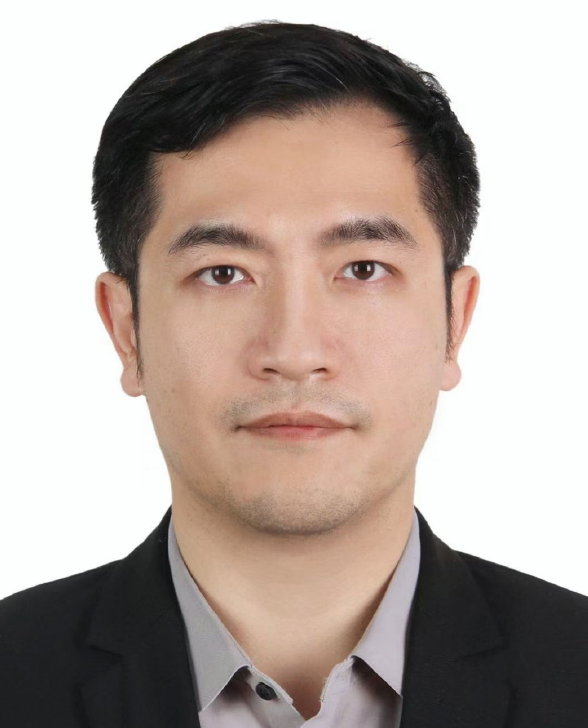}}]{Song Zhao}
	is a senior engineer with the Future Technology Center, China Telecom Research Institute, Beijing, China. He received his Ph.D. degree in Telecommunications and Information Systems from Beijing University of Posts and Telecommunications, Beijing, China. He serves as a delegate of China Telecom in the 3GPP SA2 and SA5 working groups, and as rapporteur for multiple 3GPP work items related to network automation and intelligence. His research interests include wireless channel modeling, radio resource management, proximity services, and network AI.
\end{IEEEbiography}

\begin{IEEEbiography}[{\includegraphics[width=1in,height=1.25in,clip,keepaspectratio]{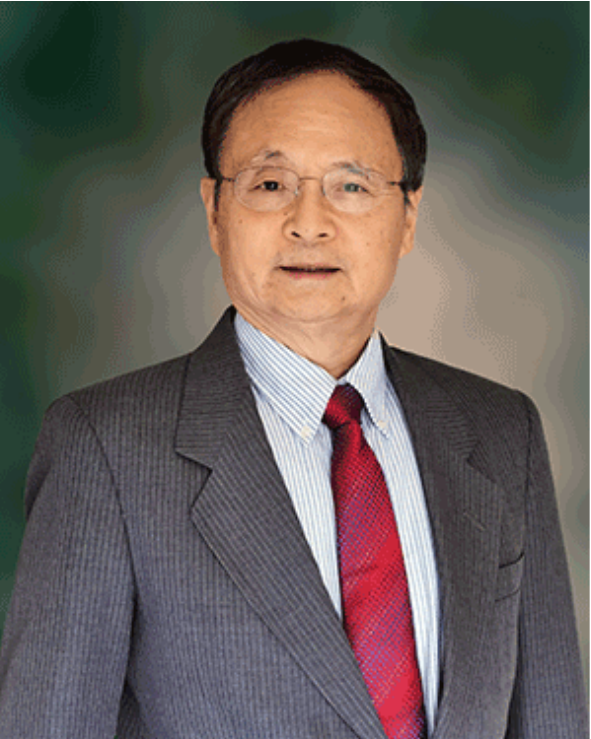}}]{Qi Bi}
	(Fellow, IEEE) is Chief Scientist of China Telecom and CTO of its Research Institute, specializing in 5G and 6G. He received his M.S. from Shanghai Jiao Tong University and Ph.D. from Pennsylvania State University. He was awarded Bell Labs Fellow in 2002, the Bell Labs President’s Gold Awards in 2000 and 2002, the Asian American Engineer of the Year in 2005, the Beijing Outstanding Contribution Award for Innovation and Entrepreneurship for Overseas Scholars in 2019, and three Patent Silver Prizes from the China National Intellectual Property Administration from 2022 to 2024.
\end{IEEEbiography}

\begin{IEEEbiography}[{\includegraphics[width=0.9in,height=1.1in,clip,keepaspectratio]{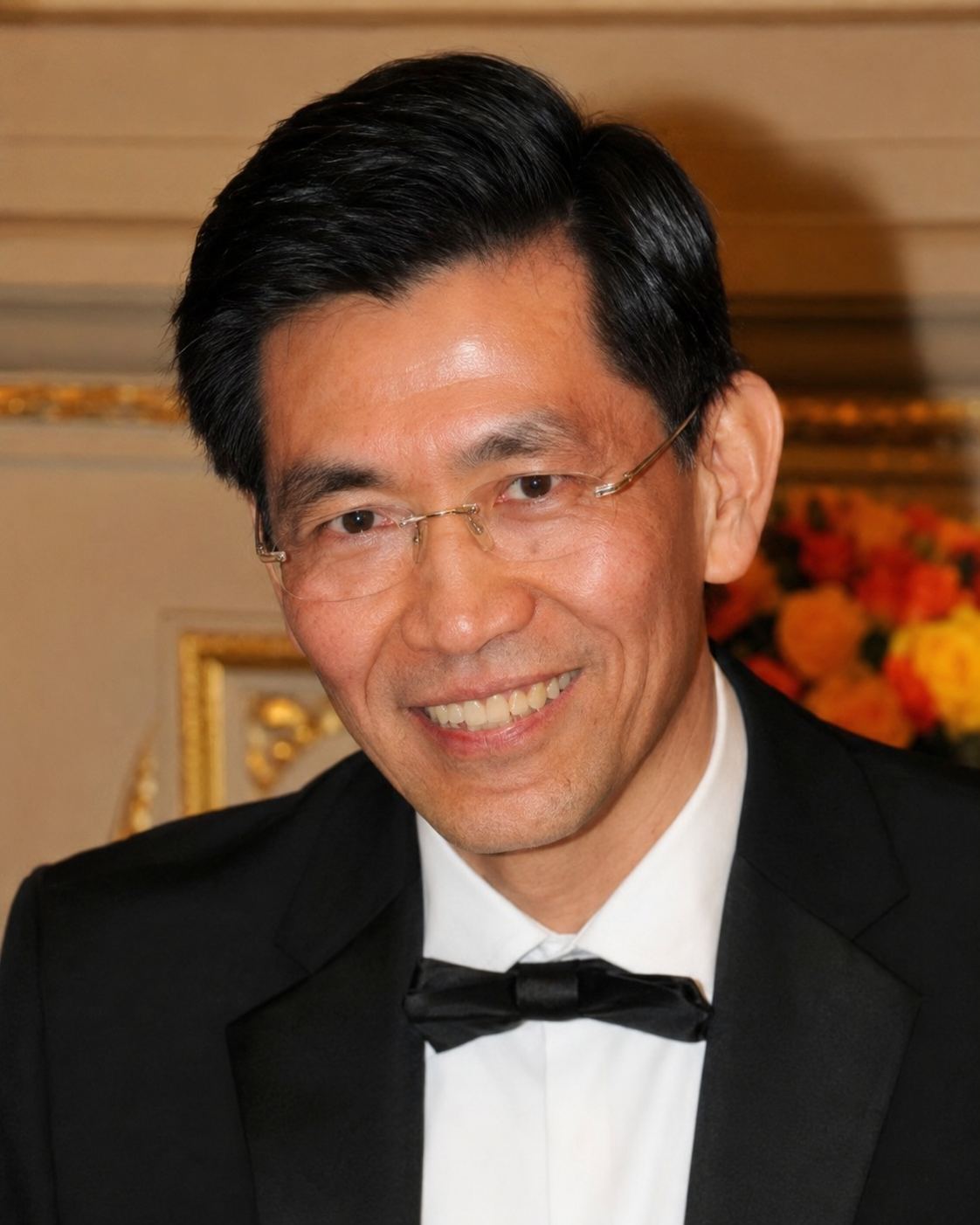}}]{SHENG CHEN} (IEEE Life Fellow) is an Emeritus Professor, University of Southampton, Southampton, U.K.
	
	He received the B.Eng. degree in control engineering from the East China Petroleum Institute, Dongying, China, in January 1982, the Ph.D. degree in control engineering from City University, London, in September 1986, and the higher doctoral (D.Sc.) degree from the University of Southampton, Southampton, U.K., in August 2005. From 1986 to 1999, he held research and academic appointments with the University of Sheffield, the University of Edinburgh, and the University of Portsmouth, all in U.K. In September 1999, he joined the School of Electronics and Computer Science, University of Southampton, and retired in June 2026. Since then, he is a University Visitor associated with the School of Electronics and Computer Science, University of Southampton. In 2023, Dr. Chen was appointed  a Distinguished Professor at Ocean University of China.
	
	Professor Chen has established an over 40-year distinguished academic career in the fields of adaptive signal processing, wireless communications, neural networks and machine learning. He has authored/coauthored over 750 research papers. He has 23,000+ Web of Science citations with h-index 67, and 44,000+ Google Scholar citations with h-index 88. Professor Chen is IEEE ComSoc Signal Processing and Computing for Communications (SPCC) 2025 Technical Recognition Award recipient.
	
	Professor Chen was elected as a Fellow of the United Kingdom Royal Academy of Engineering in 2014. He is a fellow of Asia-Pacific Artificial Intelligence Association, a fellow of Industry Academy of the International Artificial Intelligence Industry Alliance, a fellow of IET, and one of the original 200 ISI Highly Cited Researchers in engineering (March 2004).
\end{IEEEbiography}
	
\end{document}